\documentclass[a4paper,UKenglish,cleveref, autoref, thm-restate]{lipics-v2021}

\hideLIPIcs  


\usepackage{amsmath}
\usepackage{amssymb}
\usepackage{amsthm}
\usepackage{braket}
\usepackage{xspace}
\usepackage[ruled,vlined,linesnumbered]{algorithm2e}
\usepackage{cite}
\usepackage{soul}
\usepackage{color}

\newcommand\smlg{\textsf{SMLG}\xspace}
\newcommand\ov{\textsf{OV}\xspace}
\newcommand\ovh{\textsf{OVH}\xspace}

\newcommand\seth{\textsf{SETH}\xspace}

\newtheorem{problem}{\textbf{Problem}}

\bibliographystyle{plainurl}

\title{From Bit-Parallelism to Quantum String Matching for Labelled Graphs} 


\author{Massimo Equi}{Department of Computer Science, University of Helsinki, Finland}{massimo.equi@helsinki.fi}{https://orcid.org/0000-0001-8609-0040}{}

\author{Arianne Meijer - van de Griend}{Department of Computer Science, University of Helsinki, Finland}{arianne.vandegriend@helsinki.fi}{https://orcid.org/0000-0001-5946-0958}{}

\author{Veli M\"akinen}{Department of Computer Science, University of Helsinki, Finland}{veli.makinen@helsinki.fi}{https://orcid.org/0000-0003-4454-1493}{}

\authorrunning{M. Equi, A. Meijer - van de Griend, V. M\"akinen} 

\Copyright{Massimo Equi, Arianne Meijer - van de Griend and Veli M\"akinen} 

\ccsdesc[500]{Theory of computation~Quantum computation theory}
\ccsdesc[500]{Theory of computation~Parallel algorithms}
\ccsdesc[500]{Theory of computation~Pattern matching} 
\ccsdesc[500]{Theory of computation~Graph algorithms analysis}


\keywords{Bit-parallelism, quantum computation, string matching, level DAGs} 

\category{} 





\nolinenumbers 

\EventEditors{Laurent Bulteau and Zsuzsanna Lipt\'{a}k}
\EventNoEds{2}
\EventLongTitle{34th Annual Symposium on Combinatorial Pattern Matching (CPM 2023)}
\EventShortTitle{CPM 2023}
\EventAcronym{CPM}
\EventYear{2023}
\EventDate{June 26--28, 2023}
\EventLocation{Marne-la-Vall\'{e}e, France}
\EventLogo{}
\SeriesVolume{259}
\ArticleNo{10}

\begin{document}

\maketitle

\begin{abstract}
Many problems that can be solved in quadratic time have bit-parallel speed-ups with factor $w$, where $w$ is the computer word size. A classic example is computing the edit distance of two strings of length $n$, which can be solved in $O(n^2/w)$ time. In a reasonable classical model of computation, one can assume $w=\Theta(\log n)$, and obtaining significantly better speed-ups is unlikely in the light of conditional lower bounds obtained for such problems.

In this paper, we study the connection of bit-parallelism to quantum computation, aiming to see if a bit-parallel algorithm could be converted to a quantum algorithm with better than logarithmic speed-up. We focus on \emph{string matching in labeled graphs}, the problem of finding an exact occurrence of a string as the label of a path in a graph. This problem admits a quadratic conditional lower bound under a very restricted class of graphs (Equi et al. ICALP 2019), stating that no algorithm in the classical model of computation can solve the problem in time $O(|P||E|^{1-\epsilon})$ or $O(|P|^{1-\epsilon}|E|)$. We show that a simple bit-parallel algorithm on such restricted family of graphs (level DAGs) can indeed be converted into a realistic quantum algorithm that attains subquadratic time complexity $O(|E|\sqrt{|P|})$.
\end{abstract}

\section{Introduction}
Exact string matching problem is to decide if a pattern string $P$ appears as a substring of a text string $T$. In the classical models of computation, this problem can be solved in $O(|P|+|T|)$ time \cite{KMP77}. Different quantum algorithms for this basic problem have been developed \cite{NY21,RV03,SM21}, resulting into different solutions, the best of which finds a match in $O(\sqrt{|T|}(\log^2|T|+\log|P|))$ time~\cite{NY21} with high probability. These assume the pattern and text are stored in quantum registers, requiring thus $O(|P|+|T|)$ qubits to function. Moreover, these approaches may rely on applying a linear number of quantum gates in parallel on different qubits. For example, Niroula and Nam \cite{NY21} perform $O(\log(|T|))$ rounds of parallel swaps, executing $O(|T|)$ swaps in parallel per round.

In the classical models of computation, an analogy for these assumptions is to assume that the text has been preprocessed for subsequent queries. For example, one can build a Burrows-Wheeler transform -based index structure for the text in time $O(|T|)$ \cite{BCKM20}, assuming $T \in \{1,2,\ldots, \sigma\}^*$, where $\sigma\leq |T|$. Then, one can query the pattern from the index in $O(|P|\log \log \sigma)$ time \cite[Theorem 6.2]{BCKM20}. In this light, quantum models can offer only limited benefit over the classical models for exact string matching.

Motivated by this difficulty in improving linear-time solvable problems using quantum approaches, let us consider problems known to be solved in quadratic time. For example, approximate string matching problem is such a problem: decide if a pattern string $P$ is within edit distance $k$ from a substring of a text string $T$, where edit distance is the number of single symbol insertions, deletions, and substitution needed to convert a string to another. This problem can be solved using \emph{bit-parallelism} in $O(\lceil |P|/w \rceil |T|)$ time \cite{Mye99}, under the Random Access Memory (RAM) model with computer word size $w$. A reasonable assumption is that $w=\Theta(\log |T|)$, so that this model reflects the capacity of classical computers. Thus, when $|P|=|T|=n$, this bit-parallel algorithm for approximate string matching takes time at least $\Omega(n^{2-\epsilon})$ for all $\epsilon>0$, as $\log n=o(n^\epsilon)$ for all $\epsilon>0$. It is believed that this quadratic bound cannot be significantly improved, as there is a matching conditional lower bound saying that if approximate pattern matching could be solved in time $O(n^{2-\epsilon})$ with some $\epsilon>0$, then the Orthogonal Vector Hypothesis (\ovh) and thus the Strong Exponential Time Hypothesis (\seth) would not hold \cite{BI18}. As these hypotheses are about classical models of computation, it is natural to ask if the quadratic barrier could be broken with quantum computation.

In this quest for breaking the quadratic barrier, we study another problem with a bit-parallel solution and a conditional lower bound. Consider exact pattern matching on a graph, that is, consider deciding if a pattern string $P\in \Sigma^*$ equals a labeled path in a graph $G=(V,E)$, where $V$ is the set of nodes and $E$ is the set of edges. Here we assume the nodes $v$ of the graph are labeled by $\ell(v)\in \Sigma$ and a path $v_1 \to v_2 \to \cdots v_{t}$, $(v_i,v_{i+1}) \in E$ for $1\leq i<t$, spells string $\ell(v_1)\ell(v_2)\cdots \ell(v_t)$. There is an \ovh lower bound conditionally refuting an $O(|P||E|^{1-\epsilon}$) or $O(|P|^{1-\epsilon}|E|$) time solution \cite{EGMT19}. This conditional lower bound holds even if graph $G$ is a \emph{level} DAG: for every two nodes $u$ and $v$, holds the property that every path from $u$ to $v$ has the same length. On DAGs, this string matching on labeled graphs (\smlg) problem can be solved in $O(\lceil |P|/w \rceil |E|)$ time \cite{RMM19} in the bit-parallel model, so the status of this problem is identical to that of approximate pattern matching on strings. However, the simplicity of the bit-parallel solution for \smlg on level DAGs enables a connection to quantum computation. We consider a specific model of quantum computation, the Quantum Random Access Memory (QRAM) model \cite{GLM2008}, in which we have access to ``quantum arrays'', and we assume that integer values like $|P|$, $|V|$ or $|E|$ fit into a (quantum) memory word. Under this model, we turn the bit-parallel solution into a quantum algorithm that solves \smlg on level DAGs with high probability in $O(|E|\sqrt{|P|})$ time, breaking through the classical quadratic conditional lower bound.

Classical conditional lower bounds are not new to be broken by quantum computing. For example, the quadratic Orthogonal Vectors problem itself can be solved in subquadratic time (linear using QRAM) using quantum computing. This is not the only problem to have a better-than-quadratic solution in the quantum realm \cite{V2019}. Nevertheless, to the best of our knowledge, we are the first to propose a subquadratic time algorithm for \smlg, even if restricted to a specific class of graphs. Moreover, the translation of a bit-parallel strategy to a quantum-parallel one is an original technique, and we are not aware of any other work utilising it.

An earlier work \cite{DGT22} provided a quantum algorithm solving \smlg in time $O(\sqrt{|V||E|}|P|)$. When the graph is non-sparse, that is $|V|=O(\sqrt{|E|})$, the time complexity becomes $O(|E|^\frac34|P|)$, which is an improvement over classical algorithms. We offer a different kind of trade-off, limiting ourselves to a special class of graphs, but obtaining a better time complexity. We also note that, even if no subquadratic classical algorithm exists for non-sparse graphs, the existing classical reduction from \ov \cite{EGMT19} produces a sparse level DAG, for which our quantum algorithm runs in subquadratic time.

As mentioned above, in some previous works \cite{NY21,SM21,RV03}(and references in \cite{SM21}) algorithms have been proposed to solve string matching in plain text in the QRAM model, under the assumption that a large number of quantum gates, possibly linear, can be applied in parallel when acting on different qubits. We find this assumption to be too restrictive, as even the classical RAM model does not adopt it, since in such a model of computation many operations would become trivial. Instead, our algorithm works without the need for such an assumption.

The paper is structured as follows. We revisit exact pattern matching and derive a simple quantum algorithm for it, in order to introduce the quantum machinery. Then we give a brute-force quantum algorithm for \smlg, which we later improve on level DAGs. This improvement is based on extending the Shift-And algorithm \cite{B-YG1992}, whose quantum version we extend for level DAGs. 

In what follows, we assume the reader is familiar with the basic notions in quantum computing as covered in textbooks \cite{NC10QCQI}.

\section{Preliminaries}

An \emph{alphabet} $\Sigma$ is a set of \emph{characters}. Throughout the paper we assume $\Sigma$ is ordered, i.e., for each $a,b \in \Sigma$ we can decide if $a<b$. A sequence $P\in \Sigma^n$ is called a \emph{string} and its length is denoted $n=|P|$. We denote integers $i,i+1,\ldots,j$ as interval $[i..j]$ and represent a string $P$ as an array $P[0..n-1]$, where $P[i]\in \Sigma$ for $0\leq i\leq n-1$, as in this work all indexes start from $0$. String $P[i..j]$ is called a \emph{substring} and string $P[0..i]$ a \emph{prefix} of $P$. With bit-vectors discussed next, we use $0$-based indexing.

Let $B$ be a $w$-bit integer interpreted as string $B[0..w-1]$ from alphabet $\{0,1\}$ such that $B=\sum_{i=0}^{w-1} B[i]\cdot 2^i$. We call $B$ a \emph{bit-vector}. Given two bit-vectors $B$ and $C$, we define the following Boolean operations $A=B\land C$, $O=B\lor C$, and $N=\neg B$ as follows: $A[i]=1$ iff $B[i]=C[i]=1$, $O[i]=1$ iff
$B[i]=1$ or $C[i]=1$, and $N[i]=1$ iff $B[i]=0$. When bit-vector content is visualized, we list the most significant bit first, i.e., $B[w-1] B[w-2] \cdots B[0]$. With this in mind, we define the left-shifts $L=B \ll k$ and right-shifts $R=B \gg k$ as follows: $L[i+k]=B[i]$ and $R[i]=B[i+k]$. Here values out of the domain of the bit-vectors are assumed to be $0$. Logarithms are assumed to be in base two: $\log n=\log_2 n$.

In \emph{directed labelled graph} (DAG) $G=(V,E,\ell)$, $V$ is the set of nodes, $E$ is the sets of vertices, and $\ell:V\rightarrow\Sigma$ is a labelling function that assigns a character of the alphabet to each node. We assume the nodes to be indexed as $v_0, v_1, \ldots v_{n-1}$ in topological order, where $n=|V|$. For $v_i \in V$, $\ell(v_i)$ is its label. Set of nodes $in(v_i) = \{j \,|\, (v_j,v_i)\in E\}$ contains the indexes of the in-neighbours of $v_i$, and $D_i=|in(v_i)|$ is the in-degree of $v_i$. If, for $0\leq d \leq D_i-1$, $v_k$ is the $d$-th in-neighbour of $v_i$ according to the topological indexing that we defined above, we express this fact using notation $k=in_i(d)$, where $in_i:[0,D_i-1]\rightarrow[0,n-1]$.

In this work, we study the problem of \emph{string matching in labelled graphs}, that consists in finding a match for a pattern string $P[0..m-1]$ in a labelled graphs $G$ over alphabet $\Sigma$, where $P$ has a \emph{match} in $G$ if there is a path $v_1, \ldots, v_k$ such that $P = \ell(v_1) \cdots \ell(v_k)$ (we also say that $P$ \emph{occurs} in $G$, and that $v_1, \ldots, v_k$ is an \emph{occurrence} of $P$). Notice that if $|P|=1$, a classic visit of the graph solves the problem in linear time, thus we always assume $|P| \geq 2$.

\begin{problem}[String Matching in Labeled Graphs (\smlg)]
\item{\textsc{input}:} A labeled graph $G = (V,E,L)$ and a pattern string $P$, both over an alphabet $\Sigma$.
\item{\textsc{output}:} \emph{True} if and only if there is at least one occurrence of $P$ in $G$.
\end{problem}

\section{Quantum Notation and Preliminaries}
In quantum computing, data is represented in quantum bits (qubits), the quantum analogue to classical bits. 
A qubit can be in two states, denoted as $\ket{0} =$ ($\begin{smallmatrix}1\\0\end{smallmatrix}$) and $\ket{1} =$ ($\begin{smallmatrix}0\\1\end{smallmatrix}$) but, unlike a classical bit, it can also be a linear combination of the two states, a \textit{superposition}: $\ket{\psi} = \alpha\ket{0} + \beta\ket{1}$.
The complex values $\alpha$ and $\beta$ are called the \textit{amplitudes} of $\ket{\psi}$. Measuring a qubit in superposition will result in either $\ket{0}$ or $\ket{1}$ with probabilities $|\alpha|^2$ and $|\beta|^2$, respectively. Note that this notation can easily be generalised to integer states $\ket{n}$ using the tensor product between the quantum states of the binary representation on $n$: $\ket{n} = \bigotimes_{i \in binary(n)}\ket{i}$, and in this case we use the term \emph{quantum register}. Throughout the paper, we will use notation $\ket{q}_Q$ to denote that qubit $Q$ is in state $\ket{q}$. We use lower case letters for quantum states and capital letters for qubits.

In this work, we mainly use the NOT gate $X$, the controlled NOT $CX$, and the Toffoli gate $CCX$. We also apply an OR gate, that computes a logical \emph{or} between two qubits and stores the results in a third quibit. This can easily be obtained with a simple combination of $X$ gates with a Toffoli gate.

Furthermore, to define some quantum states, we use Kronecker's delta function $\delta_{x,y}$, which is $\delta_{x,y}=1$ if $x=y$ and $\delta_{x,y}=0$ otherwise. Given superposition $\ket{\psi}=\sum_{i=0}^{n-1}\alpha_i\ket{i}_I\ket{\delta_{c,i}}_Q$, the delta function specifies that qubit $Q$ is in state $\ket{1}$ iff $i=c$, as in the following example
\[
\sum_{i=0}^{3}\alpha_i\ket{i}_I\ket{\delta_{0,i}}_Q = \alpha_0\ket{0}_I\ket{1}_Q+\alpha_1\ket{1}_I\ket{0}_Q+\alpha_2\ket{2}_I\ket{0}_Q+\alpha_3\ket{3}_I\ket{0}_Q
\]
where $I$ is a quantum register of at least two qubits.

We assume to have a \emph{quantum random access memory} (QRAM) able to use a quantum register as an index to access classical data. Let $m_0, m_1, \ldots, m_{n-1}$ be the data stored in QRAM $M$. Given quantum register $I$, the operation that reads data from $M$ into quantum register $Q$ initialized to $\ket{0}$ using $I$ as index is defined as follows \cite{GLM2008}:
\[
\sum_{i=0}^{n-1} \alpha_i\ket{i}_I\ket{0}_Q \xrightarrow{\text{QRAM read}} \sum_{i=1}^n \alpha_i \ket{i}_I\ket{0 \oplus m_i}_Q = \sum_{i=1}^n \alpha_i \ket{i}_I\ket{m_i}_Q.
\]
Notice that this is a unitary operation, and thus reading the same data into the same register twice will reset such a register to the value it had before performing the reading operation. In terms of time complexity, the execution of the read operation is proportional to the number of qubits in quantum register $I$. Under the Word-QRAM model with memory-word size $O(\log n)$ for inputs of size $n$, we can assume to be able to perform a QRAM read operation in $O(1)$, because $O(\log n)$ qubits are enough for register $I$ to index an input of size $n$. Indeed, this reflect the same assumption of the classical Word-RAM model, where operations on memory words are assumed to be constant.

\section{String Matching in Plain Text}
A quantum computer, with access to QRAM, can solve the problem of finding an exact match for a pattern string $P$ into a text string $T$ in time $O(|P|\sqrt{|T|})$, with high probability. We explain a simple solution to this problem. Let $|T|=n$ and $|P|=m$, then $T=t_0 t_1\cdots t_{n-1}$ and $P=p_0 p_1\cdots p_{m-1}$ are two strings defined over a binary alphabet, that is $t_i,p_j \in \{0,1\}$ for $0\leq i\leq n-1$ and $0\leq j\leq m-1$. We use qubits $C_T$ and $C_P$ initialized to $\ket{0}$ to track the current characters of $T$ and $P$, and we assume to have the text and the pattern stored in qubits in the following way:
\[
\ket{0}_{C_T}\ket{0}_{C_P}
\ket{t_0}_{T_0} \ket{t_1}_{T_1}\cdots \ket{t_{n-1}}_{T_{n-1}}\ket{p_0}_{P_0} \ket{p_1}_{P_1}\cdots \ket{p_{m-1}}_{P_{m-1}}.
\]
We also use auxiliary qubits $A_{-1},A_0, A_1 \cdots A_{m-1}$, and quantum registers $I$, $J$, and $Q$, all three of $\log n$ qubits. We initialize $A_{-1}$ and $Q$ to $\ket{1}$, while $A_0, A_1 \cdots A_{m-1}$, $I$ and $J$ are all initialized to $\ket{0}$. We prepare quantum register $I$ in an equally balanced superposition spanning all the text positions, that is $\ket{0}_I \rightarrow 1/\sqrt{n}\sum_{i=0}^{n-1}\ket{i}_I$, assuming $n$ to be a power of $2$, without loss of generality. If this is not the case, we generate a superposition as large as the first power of two greater than $n$, then standard techniques can be applied to handle the additional substates, as explained in Appendix~\ref{appendix:power-of-two}.

Each individual state $\ket{i}$ in the superposition represents a computation starting at position $i$ in the text. In each of these computations, we scan $T[i..i+m-1]$ and try to match each character with $P[0..m-1]$, storing the intermediate results of such comparisons in registers $A_0, A_1, \cdots, A_{m-1}$. More precisely, at iteration $j$, $0 \leq j \leq m-1$, we compute a logical \emph{xor} between $t_{i+j}$ and $p_j$ storing the result in $C_P$ via a $CX$ gate with control $C_T$ and target $C_P$. Then, we apply a $X$ gate to $C_P$, which now stores $\ket{\lnot(t_{i+j} \oplus p_j)}_{C_P}=\ket{t_{i+j} = p_j}_{C_P}$. At this point, we apply a Toffoli gate with controls $C_P$ and $A_{j-1}$, storing the value in target qubit $A_j$. We now reset $C_T$ and $C_P$ to $\ket{0}$ by applying to them the same gates again, but in reverse order. As last step in iteration $j$, we increase both $I$ and $J$ by $1$ by performing transformation $1/\sqrt{n}\sum_{i=0}^{n-1}\ket{1}_Q\ket{i}_I\ket{j}_J\rightarrow1/\sqrt{n}\sum_{i=0}^{n-1}\ket{1}_Q\ket{i + 1}_I\ket{j + 1}_J$ (this of course requires two separate addition operations), where the addition is intended to be modulo $2^n$. This allows us to read the next character of the pattern at the next iteration.

After the last iteration, we can run Grover's operator \cite{Grover96} where the marked items are represented by $\ket{a_{m-1,i}}_{A_{m-1}}=\ket{1}$, and then measure register $\ket{I}$ to locate the ending position of a match. Of course, we do not know the exact number of marked items, and we address this issue by guessing the number of items and rerunning the whole algorithm a constant number of times. We illustrate the entire procedure in Algorithm~\ref{alg:quantumPlainText}.

The algorithm is correct because, after each iteration of the \emph{for} loop, we correctly keep track of the positions of the text that are active matches for the current prefix of the pattern.
\begin{lemma}
\label{lemma:PlainTextInvariant}
After iteration $j$ of the $for$ loop of Algorithm~\ref{alg:quantumPlainText}, let qubits $I$ and $A_j$ be in superposition $1/\sqrt{n}\sum_{i=0}^{n-1}\ket{i}_I\ket{a_{j,i}}_{A_j}$. Then, $\ket{a_{j,i}}_{A_j}=\ket{1}$ if and only if $T[i..i+j]=P[0..j]$, where  $0\leq j \leq m-1$ and $0\leq i \leq n-1$.
\end{lemma}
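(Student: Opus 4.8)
The natural approach is induction on the loop index $j$, since the invariant to be established is precisely the loop invariant describing the intermediate comparison registers $A_j$. I would prove that after iteration $j$, the qubit $A_j$ paired with index $i$ in the superposition is in state $\ket{1}$ exactly when the length-$(j+1)$ substring $T[i..i+j]$ equals the prefix $P[0..j]$, for every $i$ simultaneously. The key observation making induction work is that each basis state $\ket{i}_I$ in the superposition evolves independently under the gates applied (the operations are controlled only by qubits local to that computational branch), so it suffices to track a single branch $\ket{i}$ and then assert the conclusion holds in superposition by linearity.

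Let me describe the two parts of the induction. For the base case, I would handle the very first iteration $j=0$: before the loop, $A_{-1}$ is initialized to $\ket{1}$ and $I,J$ to $\ket{0}$. In iteration $0$ the algorithm computes into $C_P$ the value $\ket{t_{i} = p_0}$ (via the $CX$ then $X$ gates reading $T_i$ and $P_0$ through registers $I,J$), then applies the Toffoli with controls $C_P$ and $A_{-1}$ into $A_0$. Since $A_{-1}=\ket{1}$, the Toffoli sets $A_0=\ket{t_i=p_0}$, which is $\ket{1}$ iff $T[i..i]=P[0..0]$, establishing the base case. For the inductive step, assume after iteration $j-1$ that $\ket{a_{j-1,i}}_{A_{j-1}}=\ket{1}$ iff $T[i..i+j-1]=P[0..j-1]$. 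In iteration $j$, the registers $I$ and $J$ have been incremented to $i+j$ and $j$ respectively, so the QRAM reads deliver $t_{i+j}$ and $p_j$; the algorithm then sets $C_P=\ket{t_{i+j}=p_j}$ and applies the Toffoli with controls $C_P$ and $A_{j-1}$ into $A_j$. Thus $\ket{a_{j,i}}_{A_j}=\ket{1}$ iff $a_{j-1,i}=1$ \emph{and} $t_{i+j}=p_j$, which by the inductive hypothesis means $T[i..i+j-1]=P[0..j-1]$ together with $T[i+j]=P[j]$, i.e. exactly $T[i..i+j]=P[0..j]$.

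I would also need to verify two bookkeeping facts to make the step airtight. First, that the reset of $C_T$ and $C_P$ to $\ket{0}$ at the end of each iteration (applying the same gates in reverse) genuinely disentangles these scratch qubits from the index register, so that entering iteration $j$ they are again in the clean state $\ket{0}$ across the whole superposition; this is where unitarity of the QRAM read (noted in the QRAM paragraph) is used. Second, that the simultaneous increments $I \mapsto I+1$ and $J \mapsto J+1$ correctly align the indices so that iteration $j$ indeed reads $T[i+j]$ and $P[j]$ in each branch $\ket{i}$; this requires tracking that $I$ holds $i+j$ at the start of iteration $j$, which itself follows by a trivial secondary induction on $j$.

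The main obstacle I anticipate is not any single calculation but the careful argument that the per-branch reasoning lifts to the full superposition \emph{without residual entanglement in the scratch registers}. If the uncomputation of $C_T$ and $C_P$ were incomplete, those qubits would remain correlated with $i$ and the states $\ket{a_{j,i}}$ could not be written cleanly in the product form $\ket{i}_I\ket{a_{j,i}}_{A_j}$ that the lemma asserts. Hence the crux of the proof is confirming that every auxiliary qubit except the $A_j$ chain is returned to a fixed state independent of $i$ at the end of each iteration, so that linearity of the applied unitaries lets the branch-wise conclusion be read off directly from the superposition.
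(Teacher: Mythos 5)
Your proposal is correct and follows essentially the same route as the paper's proof: induction on $j$, with the base case using the Toffoli controlled by $C_P$ and the initialized $A_{-1}=\ket{1}$, and the inductive step using the Toffoli on $C_P$ and $A_{j-1}$ to conclude $\ket{a_{j,i}}_{A_j}=\ket{a_{j-1,i}\land T[i+j]=P[j]}_{A_j}$. Your additional bookkeeping checks---that the uncomputation of $C_T,C_P$ leaves no residual entanglement with $I$, and that the simultaneous increments keep $I$ at $i+j$ when iteration $j$ begins---are details the paper's proof takes for granted, so including them only strengthens the argument.
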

\begin{proof}
At iteration $0$, after applying gates $CX$ and $X$, $C_P$ stores $\ket{\lnot(T[i] \oplus P[0])}_{C_P}$ and $A_{-1}$ stores $\ket{1}_{A_{-1}}$, thus the Toffoli gate simply copies value $\lnot(T[i] \oplus P[0])$ to $A_0$. Because we are working with a binary alphabet, $\lnot(T[i] \oplus P[0])$ equals $T[i]=P[0]$, and thus we obtain superposition $1/\sqrt{n}\sum_{i=0}^{n-1}\ket{i}_I\ket{T[i]=P[0]}_{A_0}$.

At iteration $j$, we assume by induction that register $A_{j-1}$ stores $\ket{a_{j-1,i}}_{A_{j-1}}=\ket{1}$ if and only if $T[i \ldots i+j-1]=P[0 \ldots j-1]$. Gates $CX$ and $X$ compute $\lnot(T[i+j] \oplus P[j])$ storing it in $C_P$. We then apply the Toffoli gate with controls $C_P$ and $A_{j-1}$, and target $A_j$, obtaining superposition
$1/\sqrt{n}\sum_{i=0}^{n-1}\ket{i}_I\ket{a_{j-1,i} \land T[i+j]=P[j]}_{A_j}$.
Thus, $\ket{a_{j,i}}_{A_j}=\ket{a_{j-1,i} \land T[i+j]=P[j]}_{A_j}$ is $\ket{1}$ if and only if $T[i..i+j]=P[0..j]$.
\end{proof}

As mentioned above, we have to be careful in running Grover's search algorithm at the end of Algorithm~\ref{alg:quantumPlainText}. We defer these details to the full proof of Theorem~\ref{theorem:QuantumSMLGCorrectness} given in Appendix~\ref{appendix:theorem5-full-proof}. For now, we assume that we are able to retrieve with arbitrarily high probability $1-(7/8)^c$ a marked substate representing a match. Combining this with Lemma~\ref{lemma:PlainTextInvariant}, we obtain the claimed result.
\begin{theorem}
Given a text string $T$, pattern string $P$ and integer $c>0$, Algorithm~\ref{alg:quantumPlainText} finds a match for $P$ in $T$ in time $O(c(|P|\sqrt{|T|}))$. If there is no match, the algorithm returns a negative answer with probability $p=1$. If there is at least one match, the algorithm returns the index of the last position of a match with probability $p>1-(7/8)^c$.
\end{theorem}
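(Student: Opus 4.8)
The plan is to prove the theorem in three parts, corresponding to the three claims: the time complexity, the zero false-positive guarantee, and the high-probability success bound. The correctness of the inner loop is already established by Lemma~\ref{lemma:PlainTextInvariant}, so the main work is to carefully account for the Grover's-search wrapper around it and to handle the unknown number of matches.

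First I would establish the time complexity. The preparation of the balanced superposition over text positions takes $O(\log n)$ time, negligible compared to the rest. The \emph{for} loop runs $m=|P|$ iterations, and each iteration applies a constant number of gates ($CX$, $X$, Toffoli, and their reversals) plus two integer additions on registers of $O(\log n)$ qubits, each of which costs $O(\log n)$; however, under the Word-QRAM model these register operations are constant-time, so each iteration is $O(1)$ and the loop is $O(|P|)$. Building the oracle that marks states with $\ket{1}_{A_{m-1}}$ is built directly from the loop, and one full Grover iteration costs one oracle call plus the diffusion operator, which is $O(\log n)$ in the number of qubits of $I$. Running Grover's algorithm with $O(\sqrt{n})$ iterations then contributes $O(\sqrt{|T|})$, so a single run costs $O(|P|+\sqrt{|T|})$. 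Repeating the whole procedure $c$ times gives the claimed $O(c(|P|+\sqrt{|T|}))$.

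Next I would address correctness of the output. By Lemma~\ref{lemma:PlainTextInvariant}, after the final iteration the qubit $A_{m-1}$ is in state $\ket{1}$ precisely for those text positions $i$ with $T[i..i+m-1]=P[0..m-1]$, i.e.\ exactly the ending positions of genuine matches. Grover's search marks only these states, so any index the algorithm returns is necessarily a true occurrence --- this gives $p=1$ for the no-match case, since there are no marked states to return and the verification of the measured index fails deterministically, allowing us to report a negative answer with certainty.

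The hard part will be the high-probability bound when at least one match exists, because the number $t$ of marked states is unknown, and Grover's search requires knowledge of $t$ (up to a constant factor) to choose the correct number of iterations. The standard remedy is the exponential-guessing strategy: we guess $t \in \{1, 2, 4, \ldots\}$ by repeatedly running Grover with an increasing number of iterations, and for each guess we measure $I$ and \emph{verify} classically (in $O(|P|)$ time via the same comparison loop) whether the measured index is a true match. A single run with a correct guess succeeds with constant probability bounded below, so failing to find a match across all guesses has probability at most a constant $<1$ per round; repeating the entire scheme $c$ times amplifies the success probability to $1-(7/8)^c$. I expect the delicate points to be bounding the per-round success probability away from zero uniformly over the unknown $t$, and confirming that the total number of Grover iterations summed over all guesses remains $O(\sqrt{|T|})$; these bookkeeping details are exactly what is deferred to the appendix, so here I would state the amplification bound and cite the appendix for the full accounting.
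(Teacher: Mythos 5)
Your first two parts are consistent with the paper: the per-run cost analysis matches, and the probability-$1$ negative answer also holds in the paper's version (in fact the paper needs no classical verification at all, because measuring the marked qubit $A_{m-1}$ and seeing $1$ already collapses register $I$ onto genuine match positions, so a ``yes'' answer is automatically correct). The gap is in your third part. The paper does not use the exponential-guessing (Boyer--Brassard--H{\o}yer--Tapp) strategy. Instead, in each of the $c$ repetitions it makes a \emph{single} uniformly random guess of the number of solutions (equivalently, a random Grover iteration count in $[1,\lambda_1]$ with $\lambda_1\approx\frac{\pi}{2}\sqrt{N}$), runs Grover once with that count, and measures the marked qubit. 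The deferred analysis (full proof of Theorem~\ref{theorem:QuantumSMLGCorrectness}) studies $p(K)=\sin^2((2K+1)\theta)$ and shows that each such repetition fails with probability at most $\tfrac78$; that is precisely where the constant $7/8$ in the statement comes from, and each repetition costs only one state preparation plus one Grover run, i.e.\ $O(|P|+\sqrt{|T|})$.

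Your scheme, as described, does not meet the claimed time bound. In the exponential search, every guess ends with a measurement, which destroys the superposition, so before the next guess you must re-prepare the state by re-running the $O(|P|)$ comparison loop; you additionally spend $O(|P|)$ per guess on classical verification. Since the exponential search makes $\Theta(\log|T|)$ guesses in the worst case, one round costs $O(|P|\log|T|+\sqrt{|T|})$ rather than $O(|P|+\sqrt{|T|})$ --- for instance with $|P|=\Theta(\sqrt{|T|})$ this exceeds the claim by a $\log|T|$ factor. You flagged the need to check that the Grover iterations sum to $O(\sqrt{|T|})$ (they do), but the unaccounted cost is the per-guess state re-preparation and verification. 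Separately, your $(7/8)^c$ bound is asserted rather than derived: with your scheme the per-round failure constant would require its own analysis, whereas in the paper $7/8$ falls out of the specific random-iteration-count argument. A fix is to abandon the exponential search and adopt the paper's one-guess-per-repetition structure, proving the per-repetition failure bound of $\tfrac78$ directly.
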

\begin{proof}
For the correctness, consider Lemma~\ref{lemma:PlainTextInvariant} where $j=m=|P|$, which is the number of times we run the \emph{for} loop. In this case, $\ket{a_{|P|,i}}_{A_{|P|}}=\ket{1}$ if and only if $T[i..i+|P|-1]=P[0..|P|-1]$. Thus, measuring these substates yields a correct solutions. The details of how to perform such a measurement respecting the time complexity and probability of success are deferred to the full proof of Theorem~\ref{theorem:QuantumSMLGCorrectness} in Appendix~\ref{appendix:theorem5-full-proof}.
\end{proof}

\begin{algorithm}
\caption{An algorithm for solving exact string matching in plain text that, using QRAM, achieves $O(|P|\sqrt{|T|})$ time complexity. The details of how to handle Grover's search at the end are given in Theorem~\ref{theorem:QuantumSMLGCorrectness}, whose full proof is deferred to Appendix~\ref{appendix:theorem5-full-proof}.}
\label{alg:quantumPlainText}
    \KwIn{Text $T$ stored as $\ket{t_0}_{T_0} \ket{t_1}_{T_1}\cdots \ket{t_{n-1}}_{T_{n-1}}$, pattern string $P$ stored as $\ket{p_0}_{P_0} \ket{p_1}_{P_1}\cdots \ket{p_{m-1}}_{P_{m-1}}$, integer $c$}
    \KwOut{A position of $T$ where a match for $P$ ends, if any}
    
    \For{$c$ times}{
    Initialize quantum registers $I, J,A_0, A_1 \cdots A_{m-1}$ as $ \ket{0}_I \ket{0}_J \ket{0}_{A_0}\ket{0}_{A_1}\ket{0}_{A_{m-1}}$\;    
    Initialize quantum register $A_{-1}$ and $Q$ as $\ket{1}_{A_{-1}}$ and $\ket{1}_Q$\;
    \smallskip
    
    \tcp{Apply $H^{\otimes \log n}$ to register $I$}
    $\ket{0}_I \rightarrow \frac{1}{\sqrt{n}}\sum_{i=0}^{n-1} \ket{i}_I$\;
    \smallskip
    
    \For{$m$ times}{\label{alg:start-oracle-text}
        \tcp{Read $T[i]$ in $C_T$ and $P[j]$ in $C_P$ using registers $I$ and $J$ as indexes}
        $\frac{1}{\sqrt{n}}\sum_{i=0}^{n-1}\ket{i}_I\ket{j}_J\ket{0}_{C_T}\ket{0}_{C_P} \rightarrow \frac{1}{\sqrt{n}}\sum_{i=0}^{n-1}\ket{i}_I\ket{j}_J\ket{t_i}_{C_T}\ket{p_j}_{C_P}$\;
        \medskip
        
        \tcp{Apply $CX$ with control $C_T$ and target $C_P$}
        $\frac{1}{\sqrt{n}}\sum_{i=0}^{n-1}\ket{t_i}_{C_T}\ket{p_j}_{C_P} \rightarrow\frac{1}{\sqrt{n}}\sum_{i=0}^{n-1}\ket{t_i}_{C_T}\ket{t_i \oplus p_j}_{C_P}$\;
        \medskip
        
        \tcp{Apply $X$ to $C_P$}
        \mbox{$\frac{1}{\sqrt{n}}\sum_{i=0}^{n-1}\ket{t_i \oplus p_j}_{C_P}\rightarrow\frac{1}{\sqrt{n}}\sum_{i=0}^{n-1}\ket{\lnot (t_i \oplus p_j)}_{C_P}=\frac{1}{\sqrt{n}}\sum_{i=0}^{n-1}\ket{t_i = p_j}_{C_P}$\;}
        \medskip
        
        \tcp{Apply Toffoli with controls $C_P$ and $A_{j-1}$, and target $A_j$}
        $\frac{1}{\sqrt{n}}\sum_{i=0}^{n-1}\ket{t_i = p_j}_{C_P}\ket{a_{j-1}}_{A_{j-1}}\ket{0}_{A_{j}} \rightarrow \frac{1}{\sqrt{n}}\sum_{i=0}^{n-1}\ket{t_i = p_j}_{C_P}\ket{a_{j-1}}_{A_{j-1}}\ket{(t_i = p_j) \land a_{j-1}}_{A_{j}}$\;
        \medskip
        
        \tcp{Reset $C_T$ and $C_P$ to $\ket{0}$ via uncomputation}
        $\frac{1}{\sqrt{n}}\sum_{i=0}^{n-1}\ket{\lnot (t_i \oplus p_j)}_{C_P}\rightarrow\frac{1}{\sqrt{n}}\sum_{i=0}^{n-1}\ket{t_i \oplus p_j}_{C_P}$\;
        
        $\frac{1}{\sqrt{n}}\sum_{i=0}^{n-1}\ket{t_i}_{C_T}\ket{t_i \oplus p_j}_{C_P}\rightarrow\frac{1}{\sqrt{n}}\sum_{i=0}^{n-1}\ket{t_i}_{C_T}\ket{p_j}_{C_P}$\;
        $\frac{1}{\sqrt{n}}\sum_{i=0}^{n-1}\ket{i}_I\ket{j}_J\ket{t_i}_{C_T}\ket{p_j}_{C_P} \rightarrow$
        \mbox{$\frac{1}{\sqrt{n}}\sum_{i=0}^{n-1}\ket{i}_I\ket{j}_J\ket{t_i \oplus t_i}_{C_T}\ket{p_j \oplus p_j}_{C_P} = \frac{1}{\sqrt{n}}\sum_{i=0}^{n-1}\ket{i}_I\ket{j}_J\ket{0}_{C_T}\ket{0}_{C_P}$\;}
        \medskip
        
        \tcp{Increment indexes $I$ and $J$}
        $\frac{1}{\sqrt{n}}\sum_{i=0}^{n-1}\ket{1}_Q\ket{i}_I\ket{j}_J \rightarrow \frac{1}{\sqrt{n}}\sum_{i=0}^{n-1}\ket{1}_Q\ket{i \oplus 1}_I\ket{j + 1}_J$\;
    }
    \smallskip

        Apply gate $Z$ to qubit $R_{n-1}$, so that the sign of the amplitude is flipped if $\ket{r_{n-1,j}}_{R_{n-1}}=\ket{1}$\label{alg:end-oracle-text}\;
        
        \smallskip
        Choose $K\in [0,|P|]$ uniformly at random\;

        \smallskip
        Run Grover's iterate operator the optimal number of times assuming to have $K$ solutions, with the oracle function being lines~\ref{alg:start-oracle-text}-\ref{alg:end-oracle-text} of this algorithm\;
        
        \smallskip
        Measure $R_{n-1}$ into classical register $R_{cl}$\;
        \If{$R_{cl}=1$}{Measure $I$ into classical register $I_{cl}$ and \Return{$I_{cl}$}}
    }
    \Return{$no$}
\end{algorithm}

\begin{figure}
    \centering
    \includegraphics{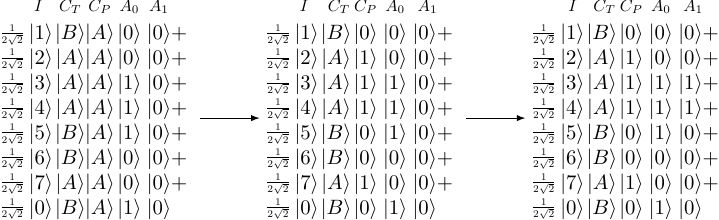}
    \caption{An example of the evolution of the superposition after one iteration of Algorithm~\ref{alg:quantumPlainText}. The first arrow represents the application of a $CX$ gate with control $T_1$ and target $P_1$, and the application of a $X$ gate on $P_1$. The second arrow represents the application of a Toffoli gate with controls $P_1$ and $A_0$, and target $A_1$. Intuitively, in the first step we are checking that $T[i+j]=P[j]$; in the second step we combine the result of this check with the contribution of the previous iteration(s). Characters $A$ and $B$ are to be considered binary values. }
    \label{fig:Alg1IterationStep}
\end{figure}

\section{String Matching in Labeled Graphs}

\subsection{Quantum Brute-force Algorithm for \smlg}
In \smlg we are given pattern string $P$ with characters in alphabet $\Sigma$ and a node-labeled graph $G=(V,E)$, with labelling function $\ell: V \rightarrow \Sigma$. We are asked  to find a path (or, actually, a walk) $\pi = v_1, v_2, \ldots, v_{|P|}$ in $G$ such that $\ell(v_1) \circ \ell(v_2) \circ \ldots \circ \ell(v_{|P|}) = P$, where $\circ$ denotes string concatenation.

One could try to obtain a quantum algorithm for \smlg by generalizing the idea we presented for plain text. The idea would be to list all possible paths of length $|P|$ in the graph, and then mark those ones that are actual matches for $P$. Unfortunately, the superposition would be as large as there are paths of length $|P|$, and thus the overall time complexity would be $O(|P|\sqrt{|V|^{|P|}})$. Moreover, an adjacency matrix would be needed to check the existence of edges between nodes in constant time, yielding a space complexity of $O(|V|^2)$ qubits. We conclude that more involved techniques are needed.

\subsection{The Classical Shift-And Algorithm}
We first introduce the classical \emph{shift-and} algorithm \cite{B-YG1992} for matching a pattern against a text and generalize it to work on graphs. Then, we show how the bit-vector data structure of that algorithm can be represented as a superposition of a logaritmic number of qubits. This approach allows us to achieve better performances than the brute force algorithm.

In the shift-and algorithm, we use bit vector $B$ of the same length of pattern $P$ to represent which of its prefixes are matching the text during the computation. Assuming integer-alphabet $\Sigma$, we also initialize bidimensional array $M$ of size $|P|\times|\Sigma|$ so that $M[j][\mathtt{c}]=1$ if and only if $P[j]=\mathtt{c}$, and $M[j][\mathtt{c}]=0$ otherwise. The algorithm starts by initializing vector $B$ to zero and array $M$ as specified above. Then, we scan whole text $T$ performing the next four operations for each $T[i]$, $i \in [0,n-1]$, where $M[*][\mathtt{c}]$ represents the $\mathtt{c}$-th column of $M$:
\begin{enumerate}
    \item \label{i1-c} $B \leftarrow B + 1$;
    \item \label{i2-c} $B \leftarrow B \land M[*][T[i]]$;
    \item \label{i3-c} if $B[m-1] = 1$, return \emph{yes};
    \item \label{i4-c} $B \leftarrow B<<1$.
\end{enumerate}
Operation \ref{i1-c} sets the least significant bit of $B$ to $1$, which is needed to test $P[0]$ against $T[i]$. Operation \ref{i2-c} computes a bit-wise \emph{and} between $B$ and the column of $M$ corresponding to character $T[i]$. Remember that $M[j][T[i]]=1$ means $P[j]=T[i]$, thus this operation leaves each bit $B[j]$ set to $1$ if and only if it was already set to $1$ before this step and  the the $j$-th character of the pattern matches the current character of the text. At this point, if bit $B[m-1]$ is set to 1 we have found a match for $P$, and Operation~\ref{i3-c} will return \emph{yes}. For the other positions, if bit $B[j]$ is set to $1$, then we know that prefix $P[0..j]$ matches $T[i-j+1..i]$, and Operation \ref{i4-c} shifts the bits in $B$ by one position, so that in the next iteration we will check whether $P[j+1]$ matches $T[i+1]$.

In labeled DAG $G=(V,E)$, each node $v_i \in V$ has a single-character label $\ell(v_i)$. We generalize the shift-and algorithm to labeled DAGs by computing a bit-vector $B_i$ for each node $v \in V$, initializing them to zero. Consider a BFS visit of DAG $G$. When visiting node $v_i$, each bit-vector $B_k$ of its in-neighbour $v_k \in in(v_i)$ represents a set of prefixes of $P$ matching a path in the graph ending at $v_k$. Thus, we merge all of this information together by taking the bit-wise \emph{or} of all of the in-neighbours of $v_i$, that is we replace Operation~\ref{i1-c} with $B_i \leftarrow 1+\bigvee_{v_k \in in(v_i)}B_k$. Operations~\ref{i2-c},~\ref{i3-c} and~\ref{i4-c} are performed as before. An example of the state of the data structures after the execution of the algorithm is shown in Figure~\ref{fig:shiftandex-LevelDAG}, and the body of the iteration now is:
\begin{enumerate}
    \item \label{i1-dag} $B_i \leftarrow 1+\bigvee_{v_k \in in(v)} B_k$;
    \item \label{i2-dag} $B_i \leftarrow B_i \land M[*][T[i]]$;
    \item \label{i3-dag} if $B_i[m-1] = 1$, return \emph{yes};
    \item \label{i4-dag} $B_i \leftarrow B_i<<1$.
\end{enumerate}

\subsection{Quantum Bit-Parallel Algorithm for Level DAGs}
We make the classic techniques work in a quantum setting for a special class of DAGs, which we call \emph{level} DAGs. A level DAG is a DAG such that, for every two nodes $v$ and $w$, every path from $v$ to $w$ has the same length, as for the DAG in Figure~\ref{fig:shiftandex-LevelDAG}. We also note that \emph{degenerate strings} \cite{Alzetal20} can be represented as level DAGs.
\begin{figure}
    \centering
    \includegraphics[scale=0.6]{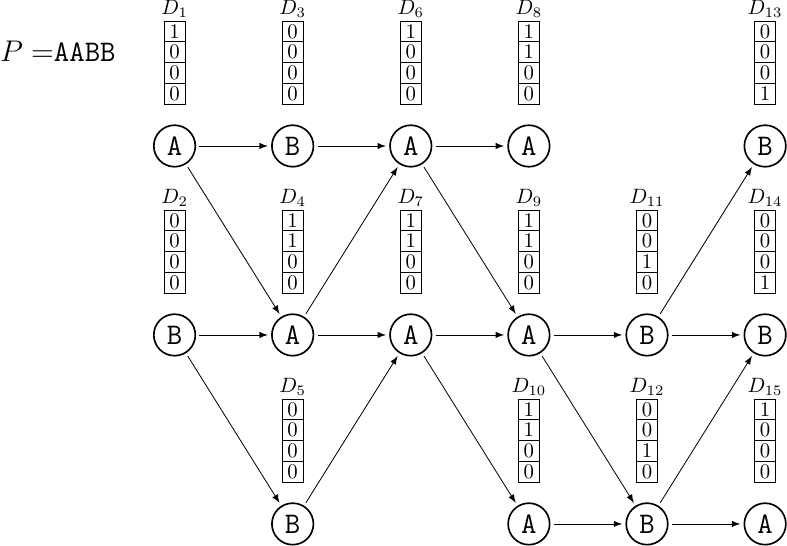}
    \caption{The adaptation of the classical algorithm for matching pattern $P$ in \emph{level} DAG $G$. Each bit-vector $D_v$ represent the result after the merging of the bit-vectors of the in-neighbours of $v$ and before the shifting.}
    \label{fig:shiftandex-LevelDAG}
\end{figure}
We use a function representing in-neighbours:
\[
    in_i(d) = \text{index of the } d \text{-th in-neighbour} \text{ of } v_i
\]

Our approach aims to represent each bit vector $B_i$ with a single qubit $V_i$ set up in a proper superposition, and translate the bit-wise operations to parallel operations across such superposition. In the algorithm, we use the following qubits and quantum registers. Quantum registers $I$ and $J$ store the index of a node and the position in the pattern, respectively. Qubit $V_i$ represents, in superposition, the bit-vector of the node $v_i$, and qubit $E_{i,d}$ stores the contribution of edge $(v_{in_i(d)},v_i) \in E$ in the update of qubit $V_i$, for, $0\leq i \leq n-1$, $0\leq d\leq D_i-1$ and $D_i=indeg(v_i)$. Quantum register $C$ stores label $\ell(v_i)$ of the node in the current iteration, and is used to fetch the content of the corresponding matrix column, which we will store in qubit $M$. Occurrences of the pattern encountered during the execution of the algorithm are stored in qubit $R_i$. Qubits $V_i'$ and $R_i'$ are auxiliary qubits used to store intermidiate results, and we also use auxiliary qubits $A$ and $B$ and auxiliary quantum register $Q$ to implement necessary operations. Moreover, we assume to have access to QRAM.

\subsubsection{The algorithm}
Assume all the quantum registers and qubits to be initialized to $\ket{0}$, except $Q$ initialized to $\ket{1}$. The algorithm starts by setting quantum register $J$ in a balanced superposition, by applying the Hadamard gate on each one of its qubits. Then, we initialize qubits $A$ so that $\ket{a_j}_A=\ket{1}$ for $j = 0$, and $\ket{a_j}_A=\ket{0}$ otherwise. We do the same with qubit $B$, with the difference that $\ket{b_j}_A=\ket{1}$ for $j = m-1$, and $\ket{b_j}_A=\ket{0}$ otherwise.
We can do these operations with two applications of a generalized Toffoli gate, using register $J$ as control and qubits $A$ and then $B$ as targets. In the case of qubit $A$, we first apply an $X$ gate to every qubit of register $J$, we then apply the Toffoli gate, and finally we undo the applications of the $X$ gate. The generalised Toffoli has a cost proportional to the number of qubits in $J$, that is logarithmic in the size of the input, and because this is an operation between a single quantum register and a qubit, we can assume it to be constant in the Word-QRAM model.
We then initialize the qubits representing the bit-vectors of the nodes at level $0$. This is done with the same operations described below for the main loop, the only difference being that these nodes do not have in-neighbours and thus we can simplify some operations. Specifically, we load each entry of the character matrix in superposition and we use it and qubit $A$ as controls of a Toffoli gate which thus flips to $\ket{1}$ sub-state $\ket{v_{i,j}}_{V_i}$ if $P[0]$ matches $\ell(v_i)$.

The rest of the algorithm maintains almost the same overall structure, with the exception of one necessary adaptation. In a DAG of $L$ levels where $L_l$ is the set of nodes at level $l$, for $0 \leq l\leq L-1$, we iterate over them one at the time, and for each level we process its nodes one after the other. As we will better explain later, we wait before applying the quantum equivalent of the shift operation once we scanned the whole level, not after processing every node. The overall idea is to translate the classical bit-parallel operations into analogous quantum operations that work across the superposition. This translation of bit-parallelism to superposition parallelism is the core of our technique, and we now describe how to apply it to each operation. The pseudocode of the entire procedure is given in Algorithm~\ref{alg:quantumSMLG}, where all the arithmetic operations are to be considere modulo $2^{|P|}$. We only omit the pseudocode for procedures \texttt{SourceNodesInit()}, \texttt{IncreaseI()} and \texttt{IncreaseJ()}, which is to be found in Appendix~\ref{appendix:pseudocode}. We also assume $|P|$ to be a power of two. If this is not the case, we generate a superposition as large as the first power of two greater than $|P|$, then standard techniques can be used to handle the additional substates, as explained in Appendix~\ref{appendix:power-of-two}. 

\textbf{Operation~\ref{i1-dag} (line \ref{algline:op1})} can be broken down into two simpler operations: computing the bit-wise $or$ and adding $1$. In our translation to quantum computing, each sub-state of superposition $\sum_{j=0}^{m-1}\ket{j}_J\ket{v_{i,j}}_{V_i}$ represents an entry of the classical bit-vector used in the Shift-And algorithm. Thus, what was a bit-wise $or$ is now easily translated into the application of few quantum gates. Notice that, to compute the logical $or$ between two generic qubits $P$ and $Q$ and store the result in qubit $R$, we can follow De Morgan's rules and apply an $X$ gate to both $P$ and $Q$, apply a Toffoli gate with controls $P$ and $Q$ and target $R$, apply an $X$ gate to $R$, and finally apply an $X$ gate to $P$ and $Q$ again to restore their initial values. In our case, at iteration $i$, we use qubit $E_{i,d}$ to store the $or$ computed among the first $d+1$ in-neighbours $v_{in_i(0)}, \ldots, v_{in_i(d)}$ of node $v_i$, and we compute it in the following way. Let
\[
\frac{1}{\sqrt{m}}\sum_{j=0}^{m-1}\ket{j}_J\ket{v_{in_i(0),j}}_{V_{in_i(0)}} \ket{v_{in_i(1),j}}_{V_{in_i(1)}} \cdots \ket{v_{in_i(d-1),j}}_{V_{in_i(d-1)}}\ket{e_{i,d-1,j}}_{E_{i,d-1}}
\]
be such that
\[
e_{i,d-1,j} = v_{in_i(0),j} \lor v_{in_i(1),j} \cdots \lor v_{in_i(d-1),j}.
\]
We compute the value of $E_{i,d}$ from $E_{i,d-1}$ and $V_{in_i(d)}$ as
\begin{align*}
    &\frac{1}{\sqrt{m}}\sum_{j=0}^{m-1}\ket{j}_J\ket{v_{in_i(d),j}}_{V_{in_i(d)}}\ket{e_{i,d-1,j}}_{E_{i,d-1}}\ket{0}_{E_{i,d}}\rightarrow\\
    &\frac{1}{\sqrt{m}}\sum_{j=0}^{m-1}\ket{j}_J\ket{v_{in_i(d),j}}_{V_{in_i(d)}}\ket{e_{i,d-1,j}}_{E_{i,d-1}}\ket{v_{in_i(d),j} \lor e_{i,d-1,j}}_{E_{i,d}}.
\end{align*}
Once we processed the last in-neighbour, $E_{i,D_i-1}$ stores the $or$ computed among all in-neighbours, where $D_i$ is the number of in-neighbours of node $v_i$.

We implement the classic operation of adding $1$ by computing an $or$ with qubit $A$ and storing the result in $V'_i$. Since $\ket{a_j}_A=\ket{\delta_{0,j}}$, we obtain $\ket{0}_{V'_i}\rightarrow\ket{v'_{i,j}}_{V'_i}$ where $\ket{v'_{i,j}}_{V'_i}=\ket{1}$ for $j=0$, while $\ket{v'_{i,j}}_{V'_i}=\ket{e_{i,D_i-1,j}}$ for $1\leq j \leq m-1$.

\textbf{Operation~\ref{i2-dag} (line \ref{algline:op2})} is implemented as a Toffoli-gate application with qubits $M$ and $V'_i$ as control and $V_i$ as target.
\[
\frac{1}{\sqrt{m}}\sum\limits_{j=0}^{m-1}\ket{m_{j,\ell(v_i)}}_M\ket{v'_{i,j}}_{V'_i}\ket{0}_{V_i}\rightarrow \frac{1}{\sqrt{m}}\sum\limits_{j=0}^{m-1}\ket{m_{j,\ell(v_i)}}_M\ket{v'_{i,j}}_{V'_i}\ket{m_{j,\ell(v_i)} \land v'_{i,j}}_{V_i}
\]

\textbf{Operation~\ref{i3-dag} (line \ref{algline:op3})}
is replaced by storing in register $R_i$ the presence of a match ending at node $v_i$. This requires an intermediate step in which we use qubit $B$ to filter the content of $V_i$. In fact, qubit $V_i$ now is in state $\ket{v_{i,j}}_{V_i}=\ket{1}$ for those values of $j$ such that $P[0..j]$ has a match ending at $v_i$ in the graph, and $\ket{v_{i,j}}_{V_i}=\ket{0}$ otherwise. Since we only care about potential full matches represented by $\ket{v_{i,m-1}}_{V_i}$, we use $B$, which is in state $\ket{\delta_{m-1,j}}_B$, as control qubit of a Toffoli gate, the other control qubit being $V_i$ and the target qubit being $R'_i$.
\[
    \frac{1}{\sqrt{m}}\sum\limits_{j=0}^{m-1}\ket{v_{i,j}}_{V_i}\ket{\delta_{m-1,j}}_{B}\ket{0}_{R'_i}\rightarrow
    \frac{1}{\sqrt{m}}\sum\limits_{j=0}^{m-1}\ket{v_{i,j}}_{V_i}\ket{\delta_{m-1,j}}_{B}\ket{v_{i,j} \land \delta_{m-1,j}}_{R'_i}
\]
Then, using the same technique as in Operation~\ref{i1-dag}, we compute an $or$ between $R'_i$ and $R_{i-1}$, storing the result in $R_i$.
\begin{align*}
    &\frac{1}{\sqrt{m}}\sum\limits_{j=0}^{m-1}\ket{v_{i,j} \land \delta_{m-1,j}}_{R'_i}\ket{r_{i-1,j}}_{R_{i-1}}\ket{0}_{R_i}\rightarrow\\
    &\frac{1}{\sqrt{m}}\sum\limits_{j=0}^{m-1}\ket{v_{i,j} \land \delta_{m-1,j}}_{R'_i}\ket{r_{i-1,j}}_{R_{i-1}}\ket{(v_{i,j} \land \delta_{m-1,j}) \lor r_{i-1,j}}_{R_i}
\end{align*}

After this operation, $\ket{r_{i,m-1}}_{R_i}$ is turned to $\ket{1}$ if there is a full match of $P$ ending at $v_i$, otherwise $\ket{r_{i,m-1}}_{R_i}$ is left unaltered.

\textbf{Operation~\ref{i4-dag} (line \ref{algline:op4})} consists in shifting all bits of the classical bit-vector by one position. In the quantum setting, we can perform this operation by adding $1$ to index register $J$ and then reorganising the sum: $\ket{1}_{C_\texttt{1}}\ket{j}_J \rightarrow \ket{1}_{C_\texttt{1}}\ket{j + 1}_J$. Notice that this changes value $\ket{j}_J$ in every term of the superposition to $\ket{j + 1}_j$. This can be interpreted as ``shifting'' value $k_j$ of generic register $K$ from $\ket{j}_J\ket{k_j}_K$ to $\ket{j + 1}_J\ket{k_j}_K$. Because this operation acts on every quantum register and qubit in this way, we have to reset qubits $A$ and $B$ to $\ket{0}$ before performing this operation and reinitialize their values afterwards, so that we prevent their values to be shifted. For the same reason, we also have to wait until having processed the whole level, otherwise we would shift the values of all the nodes at the previous level and compromise the computation.

As last step of the algorithm, we run Grover's search that uses as oracle function the whole procedure described up to this point, and then applies a $Z$ gate on qubit $R$. Thus, the marked sub-states are those such that $\ket{r_{i,j}}_{R_i} = \ket{1}$, which get mapped to $-\ket{r_{i,j}}_{R_i}$. Sub-states such that $\ket{r_{i,j}}_{R_i} = \ket{0}$ remain unaltered. As for the case of string matching in plain text, we rerun the whole algorithm a constant number of times to boost the probability of success, as explained in Theorem~\ref{theorem:QuantumSMLGCorrectness} and Appendix~\ref{appendix:theorem5-full-proof}. Algorithm~\ref{alg:quantumSMLG} shows the entire procedure.

\begin{algorithm}
\caption{Algorithm for testing whether pattern string $P$ has a match in level DAG $G$, running in $O(|E|\sqrt{|P|})$} time.
\label{alg:quantumSMLG}
    \SetKwFunction{FOperationOne}{OperationOne}
    \SetKwFunction{FOperationTwo}{OperationTwo}
    \SetKwFunction{FOperationThree}{OperationThree}
    \SetKwFunction{FOperationFour}{OperationFour}
    \SetKwFunction{FIncreaseI}{IncreaseI}
    \SetKwFunction{FIncreaseJ}{IncreaseJ}
    \SetKwFunction{FSourceNodesInit}{SourceNodesInit}
    \SetKwFunction{FDoubleSuperposition}{DoubleSuperposition}
    
    \KwIn{Graph $G$, pattern $P$, and constant $c$.}
    \KwOut{Returns $yes$ if $P$ occurs in $G$, otherwise $no$.}
    \For{$c$ times}{
    Initialize quantum register $Q$ to $\ket{1}$\;\label{algline:init}
    Initialize quantum registers $I,J,A,B,C,M,V_i,V'_i,R_i,R'_i,E_{i,d}\,$ to $\ket{0}$ where $i\in [0,n-1]$ and $d \in [0,D_i-1]$\;

    \tcp{Apply Hadamard to $J$}
    $\ket{0}_J \rightarrow \frac{1}{\sqrt{m}}\sum\limits_{j=0}^{m-1}\ket{j}_J$\;
    
    \FSourceNodesInit{$I,J,A,C,V_0,\ldots,V_{n-1},V'_0,\ldots,V'_{n-1}$}\;\label{algline:SourceNodesInit}\label{alg:start-oracle-dag}
    \FIncreaseJ{$J,A,B$}\;
    \tcp{$L$ is the number of levels}
    \For(\tcp*[f]{scan every level}){$l \in [1,L-1]$}{\label{algline:OuterForStart}
        
        \For(\tcp*[f]{scan every node in the level}){$|L_l|$\label{algline:MiddleForStart} times}{
            \FOperationOne{$l,I,C,M,E_{i,0},\ldots,E_{i,D_i-1},V'_i$}\;\label{algline:op1}
            \FOperationTwo{$M,V'_i,V_i$}\;\label{algline:op2}
            \tcp{Invariant 1 holds here}
            \FOperationThree{$B,V_i,R'_i,R_i$}\;\label{algline:op3}
            \FIncreaseI{$I,M,C$}\;\label{algline:IncI}
            
        }\label{algline:MiddleForEnd}
        \FOperationFour{$J,A,B$}\;\label{algline:op4}
        \tcp{Invariant 2 holds here}
    }\label{algline:OuterForEnd}
    
        Apply gate $Z$ to qubit $R_{n-1}$, so that the sign of the amplitude is flipped if $\ket{r_{n-1,j}}_{R_{n-1}}=\ket{1}$\label{alg:end-oracle-dag}\;
        
        Choose $K\in [0,|P|]$ uniformly at random\;

        Run Grover's iterate operator the optimal number of times assuming to have $K$ solutions, with the oracle function being lines~\ref{alg:start-oracle-dag}-\ref{alg:end-oracle-dag} of this algorithm\;
        
        Measure $R_{n-1}$ into classical register $R_{cl}$\;
        \If{$R_{cl}=1$}{\Return{$yes$}}
    }
    \Return{$no$}
\end{algorithm}

\begin{algorithm}
    \SetKwFunction{FOperationOne}{OperationOne}
    \SetKwFunction{FOperationTwo}{OperationTwo}
    \SetKwFunction{FOperationThree}{OperationThree}
    \SetKwFunction{FOperationFour}{OperationFour}
    
    \SetKwProg{Fn}{Function}{:}{}
    
    \Fn{\FOperationOne{$l,I,C,M,E_{i,0},\ldots,E_{i,D_i-1},V'_i$}}{
        \For(\tcp*[f]{scan every node in the level}){$|L_l|$ times}{
            \tcp{scan every node in $in(v_i)$}
            $\frac{1}{\sqrt{m}}\sum\limits_{j=0}^{m-1}\ket{j}_J\ket{i}_I\ket{0}_C \rightarrow \frac{1}{\sqrt{m}}\sum\limits_{j=0}^{m-1} \ket{j}_J\ket{i}_I\ket{\ell(v_i)}_C$\;
            
            $\frac{1}{\sqrt{m}}\sum\limits_{j=0}^{m-1}\ket{j}_J\ket{\ell(v_i)}_C\ket{0}_M \rightarrow \frac{1}{\sqrt{m}}\sum\limits_{j=0}^{m-1} \ket{j}_J\ket{\ell(v_i)}_C\ket{m_{\ell(v_i),j}}_M$\;
            
            $k \gets in_i(0)$\tcp*{Classical operation}
            $\frac{1}{\sqrt{m}}\sum\limits_{j=0}^{m-1}\ket{v_{k,j}}_{V_{k}}\ket{0}_{E_{i,0}}\rightarrow\frac{1}{\sqrt{m}}\sum\limits_{j=0}^{m-1}\ket{v_{k,j}}_{V_{k}}\ket{v_{k,j}}_{E_{i,0}}$\;
            
            \For(\tcp*[f]{scan every node in $in(v_i)$}){$d \in [1, D_i-1]$}{                
                $k \gets in_i(d)$\tcp*{Classical operation}
                \tcp{Add the contribution of the current in-neighbour}
                $\frac{1}{\sqrt{m}}\sum\limits_{j=0}^{m-1}\ket{v_{k,j}}_{V_{k}}\ket{e_{i,d-1,j}}_{E_{i,d-1}}\ket{0}_{E_{i,d}} \rightarrow \frac{1}{\sqrt{m}}\sum\limits_{j=0}^{m-1}\ket{v_{k,j}}_{V_{k}}\ket{e_{i,d-1,j}}_{E_{i,d-1}}\ket{e_{i,d-1,j}\lor v_{k,j}}_{E_{i,d}}$\;                
            }
            \tcp{Turn to $\ket{1}$ the substate corresponding to $j=0$}
            $\frac{1}{\sqrt{m}}\sum\limits_{j=0}^{m-1}\ket{\delta_{0,j}}_A\ket{e_{i,D_i-1,j}\lor v_{k,j}}_{E_{i,D_i-1}}\ket{0}_{V'_i} \rightarrow \frac{1}{\sqrt{m}}\sum\limits_{j=0}^{m-1}\ket{\delta_{0,j}}_A\ket{e_{i,D_i-1,j}\lor v_{k,j}}_{E_{i,D_i-1}}\ket{e_{i,D_i-1,j}\lor v_{k,j} \lor \delta_{0,j}}_{V'_i}$\;
        }
    }
    \bigskip
    \Fn{\FOperationTwo{$M,V'_i,V_i$}}{
        \tcp{Compute the $and$ with the column of the matrix}
        \mbox{$\frac{1}{\sqrt{m}}\sum\limits_{j=0}^{m-1}\ket{m_{\ell(v_i),j}}_M\ket{v'_{i,j}}_{V'_i}\ket{0}_{V_i}\rightarrow \frac{1}{\sqrt{m}}\sum\limits_{j=0}^{m-1}\ket{m_{\ell(v_i),j}}_M\ket{v'_{i,j}}_{V'_i}\ket{m_{\ell(v_i),j} \land v'_{i,j}}_{V_i}$\;}
    }
    \bigskip    
    \Fn{\FOperationThree{$B,V_i,R'_i,R_i$}}{
        \tcp{Set $\ket{r_{i,m-1}}_{R_i}=\ket{1}$ if there is a match ending at $v_i$}
        \tcp{Apply Toffoli on $V_i$, $B$ and $R'_i$}
        $\frac{1}{\sqrt{m}}\sum\limits_{j=0}^{m-1}\ket{v_{i,j}}_{V_i}\ket{\delta_{m-1,j}}_{B}\ket{0}_{R'_i} \rightarrow \frac{1}{\sqrt{m}}\sum\limits_{j=0}^{m-1}
        \ket{v_{i,j}}_{V_i}\ket{\delta_{m-1,j}}_{B}\ket{v_{i,j} \land \delta_{m-1,j}}_{R'_i}$\;
        
        \tcp{Apply logic $or$ on $R'_i$, $R_{i-1}$ and $R_i$}
        $\frac{1}{\sqrt{m}}\sum\limits_{j=0}^{m-1}\ket{v_{i,j} \land \delta_{m-1,j}}_{R'_i}\ket{r_{i-1,j}}_{R_{i-1}}\ket{0}_{R_i} \rightarrow \frac{1}{\sqrt{m}}\sum\limits_{j=0}^{m-1}\ket{v_{i,j} \land \delta_{m-1,j}}_{R'_i}\ket{r_{i-1,j}}_{R_{i-1}}\ket{(v_{i,j} \land \delta_{m-1,j}) \lor r_{i-1,j}}_{R_i}$\;
    }
    \bigskip    
    \Fn{\FOperationFour{$J,A,B$}}{
        \FIncreaseJ{$J,A,B$}\;
    }
\end{algorithm}

To prove the correctness of Algorithm~\ref{alg:quantumSMLG}, we formalise the key properties in the following lemmas. We start by ensuring that the shift operation provides the desired result. Let $l$ and $y$ be the total number of times that we started the execution of the middle \emph{for}-loop (lines~\ref{algline:MiddleForStart}-\ref{algline:MiddleForEnd}) and of the outer \emph{for}-loop (lines~\ref{algline:OuterForStart}-\ref{algline:OuterForEnd}), respectively. That is, $y=x+\sum_{\lambda=1}^{l-1}|L_\lambda|$ for $l\geq 2$, where $x \in [0,|L_l|]$ is the number of times that we started the execution of the middle \emph{for}-loop during the $l$-th iteration of the outer \emph{for}-loop. Notice that $y=0$ when $l=0$, and $y=x$ when $l=1$.

\begin{lemma}[\emph{Invariant 1}]
\label{lemma:forinvariant-Vi}
During the $l$-th execution of the outer \emph{for}-loop (lines~\ref{algline:OuterForStart}-\ref{algline:OuterForEnd}) and the $y$-th execution of the middle \emph{for}-loop (lines~\ref{algline:MiddleForStart}-\ref{algline:MiddleForEnd}) of Algorithm~\ref{alg:quantumSMLG}, but before the $y$-th execution of \texttt{OperationThree}() (line~\ref{algline:op4}), \emph{Invariant 1} holds: for every qubit $V_i$ such that $i \in L_l$ and $i \leq t$, we have substate $\ket{v_{i,j}}_{V_i}=\ket{1}$ if and only if there exists a path in $G$ ending at $v_i$ and matching $P[0,j]$, where $t=|L_0|+x-1$ is the index of the last node $v_t$ Algorithm~~\ref{alg:quantumSMLG} visited so far.
\end{lemma}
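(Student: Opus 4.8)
The plan is to argue by induction on the order in which Algorithm~\ref{alg:quantumSMLG} visits the nodes, i.e. on the cumulative counter $y$ (equivalently, on $(l,x)$ in lexicographic order). The argument is the graph analogue of Lemma~\ref{lemma:PlainTextInvariant}, with each substate $\ket{v_{i,j}}_{V_i}$ playing the role of bit $B_i[j]$ of the classical bit-vector from Operations~\ref{i1}--\ref{i4}; the goal is to show that after \texttt{OperationTwo} (line~\ref{algline:op2}) it carries the meaning ``$P[0..j]$ matches a path ending at $v_i$''. Since \texttt{OperationThree} (line~\ref{algline:op3}) only copies $V_i$ into $R'_i$ and leaves it untouched, it suffices to track the effect of \texttt{OperationOne} and \texttt{OperationTwo}.

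For the base case I would take the source nodes, i.e. level $L_0$ produced by \texttt{SourceNodesInit} (line~\ref{algline:SourceNodesInit}). The only path ending at a source node $v_i$ is the length-one path $v_i$, which matches $P[0..j]$ exactly when $j=0$ and $\ell(v_i)=P[0]$; and \texttt{SourceNodesInit} sets $\ket{v_{i,j}}_{V_i}=\ket{1}$ precisely when $m_{\ell(v_i),j}\land\delta_{0,j}=1$, i.e. when $P[0]=\ell(v_i)$ and $j=0$, so the invariant holds for $l=0$.

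For the inductive step, consider the node $v_i$ processed at the $x$-th iteration of the middle loop inside level $l\ge 1$. The key structural fact is the defining property of a \emph{level} DAG: every in-neighbour $v_k\in in(v_i)$ lies in $L_{l-1}$, hence was processed during the $(l-1)$-th iteration of the outer loop and was then acted on by the single shift \texttt{OperationFour} executed at the end of that iteration. Granting (see below) that after this shift each such $V_k$ satisfies $v_{k,j}=1$ iff some path ending at $v_k$ matches $P[0..j-1]$, I would read off the two operations directly: the cascade of $or$-gates in \texttt{OperationOne} leaves $E_{i,D_i-1}$ holding $\bigvee_{v_k\in in(v_i)} v_{k,j}$, and the final $or$ with qubit $A$ (in state $\ket{\delta_{0,j}}_A$) gives $v'_{i,j}=\delta_{0,j}\lor\bigvee_{v_k\in in(v_i)} v_{k,j}$; \texttt{OperationTwo} then produces $v_{i,j}=m_{\ell(v_i),j}\land v'_{i,j}$. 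Unfolding the cases $j=0$ and $j\ge 1$ shows that $v_{i,j}=1$ iff $\ell(v_i)=P[j]$ and either $j=0$ or $P[0..j-1]$ matches a path ending at some in-neighbour of $v_i$, which is exactly the assertion that $P[0..j]$ matches a path ending at $v_i$, since any such occurrence of length at least two must route through an in-neighbour lying one level down.

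The part I expect to be the real obstacle is the assumption granted above, namely pinning down the post-shift state of the in-neighbours; this is a companion statement (\emph{Invariant 2}, asserted after line~\ref{algline:op4}) that the induction must carry in parallel. The delicate points are that the shift is \emph{deferred} until the entire level has been processed — so all of $L_{l-1}$ is shifted together and no node of $L_l$ is shifted before its neighbours are consumed — and that it is realised by incrementing $J$ rather than by a genuine bit-shift. Incrementing $J$ relabels $\ket{j}_J\mapsto\ket{j+1}_J$ across the whole superposition, so I must verify that this reproduces the logical left-shift on the window $[0,m-1]$: the content formerly at index $j$ moves to $j+1$, yielding the ``$P[0..j-1]$'' shift of meaning, while resetting $A$ and $B$ before the increment keeps their markers from being dragged along, and the freshly vacated index $0$ is harmless because \texttt{OperationOne} sets it through the $or$ with $A$. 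The indices that leave the range $[0,m-1]$ under addition modulo $2^{|P|}$ are absorbed by the power-of-two padding of Appendix~\ref{appendix:power-of-two}, so they do not affect the positions the invariant speaks about.
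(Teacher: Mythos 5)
Your proposal is correct and follows essentially the same route as the paper's proof: induction on the cumulative counter $y$, a base case discharged by \texttt{SourceNodesInit}() exactly as you describe, and an inductive step that unfolds \texttt{OperationOne}/\texttt{OperationTwo} into $v_{i,j}=m_{\ell(v_i),j}\land(\delta_{0,j}\lor\bigvee_{d} v_{in_i(d),j})$ using the level-DAG property that all in-neighbours sit one level down. The one structural difference is how the post-shift state of the in-neighbours is certified: you propose carrying a companion invariant in parallel, whereas the paper folds this into a \emph{strong} induction on $y$ with two sub-cases --- when $x=|L_l|$ it tracks the effect of \texttt{OperationFour}() across the level boundary explicitly, and when $x<|L_l|$ it ``looks back'' $x+1$ iterations to the point where $J$ was last incremented, which strong induction licenses. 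These are equivalent ways of organizing the same argument, and your treatment of the deferred shift (resetting $A$ and $B$, the vacated index $0$ being repaired by the $or$ with $A$, and the modular wrap-around absorbed by the power-of-two padding) matches the paper's. One correction: the companion statement you need is \emph{not} what the paper calls \emph{Invariant 2} --- Lemma~\ref{lemma:forinvariant-Ri} concerns the match-recording qubits $R_i$, not the post-shift semantics of the $V_k$; the claim you are granting is proved inside the induction of Lemma~\ref{lemma:forinvariant-Vi} itself (first sub-case), so citing the paper's Invariant 2 for it would not close the gap.
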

\begin{proof}
We proceed by strong induction on $y$, defined as above.

\textbf{Base case}, $y=0$. In this case, we executed the initialization but we have not run yet neither the outer nor the middle \emph{for}-loop. Thus, $l=0$, $t=|L_0|-1$, and qubits $V_i$ such that $i \in L_0$ and $i \leq t$ are those with in-degree zero, which are initialized by function \texttt{SourceNodesInit}(). For each such $i$, given that $J$ is in state $\sum_{j=0}^{m-1}\ket{j}_J$, function \texttt{SourceNodesInit}() first loads character $\ell(v_i)$ in register $C$ and matrix entry $m_{\ell(v_i),j}$ in register $M$, in superposition. Then, with regard to $t$, it performs transformation
\[
\sum\limits_{j=0}^{m-1}\ket{m_{\ell(v_i),j}}_{M}\ket{\delta_{0,j}}_{A}\ket{0}_{V_i}\rightarrow\sum\limits_{j=0}^{m-1}\ket{m_{\ell(v_i),j}}_{M}\ket{\delta_{0,j}}_{A}\ket{m_{\ell(v_i),j} \land \delta_{0,j}}_{V_i},
\]
where, by definition, $v_{t,j}=m_{\ell(v_i),j} \land \delta_{0,j}$. Thus, $\ket{v_{t,j}}=\ket{0}$ for every $j\neq 0$ because of $\delta_{0,j}$, and $\ket{v_{t,j}}_{V_i}=\ket{m_{\ell(v_i),j}}_{V_i}$ for $j=0$, which in turn means that $\ket{v_{t,0}}_{V_i}=\ket{1}$ if and only if $P[0,0]=\ell(v_i)$.

\textbf{Inductive case}, $y \geq 1$. We further divide our analysis in two sub-cases.

\textbf{First sub-case},  $x=|L_l|$. In this case, $y$ is the last iteration of the inner \emph{for}-loop during the $l$-th iteration of the outer \emph{for}-loop. We assume the inductive hypothesis to hold after the execution of \texttt{OperationTwo}(). We execute \texttt{OperationThree}() and \texttt{IncreaseI}(), which do not change the state of any $V_z$, for any $z \in [0,|V|-1]$. Now, we have to perform \texttt{OperationFour}() (line~\ref{algline:op4}) before starting iteration $y+1$ of the middle \emph{for}-loop, which will start iteration $l+1$ of the outer \emph{for}-loop. Assuming the inductive hypothesis, the application of \texttt{OperationFour}() makes every $V_z$ with $z\in L_l$ such that $\ket{j'}_J\ket{v_{z,j}}_{V_z}=\ket{j'}\ket{1}$, where $j'=j+1$, if and only if there is a match for $P[0,j]$ in $G$ ending at $v_z$, otherwise $\ket{j'}_J\ket{v_{z,j}}_{V_z}=\ket{j'}\ket{0}$. Then, we start iteration $y+1$ ($l+1$). Notice that we update $V_i$ if and only if $i\in L_{l+1}$ and, in any previous iteration of the middle \emph{for}-loop, this could have never been the case, thus every $\ket{v_{i,j}}_{V_i}$, $i \in L_{l+1}$, is currently set to $\ket{0}$. The same holds for every $V'_i$. The \emph{for}-loop inside \texttt{OperationOne}() computes a logic $or$ between all the qubits representing all the in-neighbours of $v_i$. Indeed, before running this \emph{for}-loop, we have $\ket{j'}_J\ket{v_{in_i(0),j}}_{E_{i,0}}$. After one iteration, we have $\ket{j'}\ket{v_{in_i(0),j} \lor v_{in_i(1),j}}_{E_{i,1}}$. After two iteration, we have $\ket{j'}\ket{v_{in_i(0),j} \lor v_{in_i(1),j} \lor v_{in_i(2),j}}_{E_{i,2}}$. After $D_i-1$ iterations, we have $\ket{j'}\ket{e_{i,D_i-1,j'}}_{E_{i,D_i-1}}$, where
\[
e_{i,D_i-1,j'} = \bigvee_{d=0}^{D_i-1} v_{in_i(d),j}.
\]
We store an intermediate result in $V'_i$, $\ket{v'_{i,j'}}_{V'_i}$, where $v'_{i,j'}=e_{i,D_i-1,j'}$ except for $j'=0$, because we make sure that $\ket{v'_{i,0}}_{V'_i}=\ket{1}$ thanks to the $or$ operation with qubit $A$, which stores $\ket{\delta_{0,j'}}_A$. Now we compute the logical $and$ with the entry of the matrix, as in the base case, obtaining $\ket{v_{i,j'}}_{V_i}$, where
\[
v_{i,j'} = m_{\ell(v_i),j'} \land (\delta_{0,j'} \lor \bigvee_{d=0}^{D_i-1} v_{in_i(d),j}).
\]
Applying the inductive hypothesis, this translates to
\begin{align*}
    v_{i,j+ 1} &= (P[j+1]=\ell(v_i)) \land \left((j+1 = 0) \lor \bigvee_{d=0}^{D_i-1} P[0..j]\text{ has a match ending at }v_{in_i(d)}\right)\\
    &= P[0..j+1]\text{ has a match ending at }v_i
\end{align*}
Thus, the statement of the lemma holds for $y+1$.

\textbf{Second sub-case}, $x<|L_l|$. The reasoning is analogous to the previous case, the only difference being that $j$ does not increase and thus we have to look back by $x+1$ iterations, when $j$ was increased the last time. This requires to assume that the inductive hypothesis was holding for iteration $y-x$, that is correct because, by strong induction, we assume the inductive hypothesis to hold for every $y' \leq y$ while proving the statement for $y+1$.
\end{proof}

\begin{lemma}[\emph{Invariant 2}]
\label{lemma:forinvariant-Ri}
After line~\ref{algline:OuterForEnd} of Algorithm~\ref{alg:quantumSMLG}, \emph{Invariant 2} holds: if there exists at least one match for $P$ in $G$ ending at some $v_i$ such that $i \leq t$, then there exists at least one $j$, $0 \leq j \leq m-1$, such that $\ket{r_{t,j}}_{R_i}=\ket{1}$, where $v_t$ is the last node we visited in Algorithm~\ref{alg:quantumSMLG} before line~\ref{algline:OuterForEnd}.
\end{lemma}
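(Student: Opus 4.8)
The plan is to prove Lemma~\ref{lemma:forinvariant-Ri} by induction over the nodes in the order in which Algorithm~\ref{alg:quantumSMLG} processes them, carrying a slightly stronger statement checkpointed immediately after each call to \texttt{OperationThree}(). Concretely, I would establish the claim: after \texttt{OperationThree}() has run on node $v_i$, if some full occurrence of $P$ ends at a node $v_{i'}$ with $i'\le i$, then there is a $j$ with $\ket{r_{i,j}}_{R_i}=\ket{1}$. Instantiating this at $i=t$ and noting that the final shift performed by \texttt{OperationFour}() (line~\ref{algline:op4}) only permutes the position labels then yields Lemma~\ref{lemma:forinvariant-Ri}. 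The driver of the argument is Invariant~1 (Lemma~\ref{lemma:forinvariant-Vi}): at the moment \texttt{OperationThree}() acts on $v_i$, we have $\ket{v_{i,m-1}}_{V_i}=\ket{1}$ exactly when a full match of $P$ ends at $v_i$. Since $|P|\ge 2$, no full match can end before level $m-1$, so the induction is vacuously true until the first such level and all genuine content lies in the inductive step.

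First I would analyse one detection step. Because $B$ is in state $\ket{\delta_{m-1,j}}_B$, the Toffoli gate inside \texttt{OperationThree}() produces $r'_{i,j}=v_{i,j}\land\delta_{m-1,j}$, so $\ket{r'_{i,m-1}}_{R'_i}=\ket{1}$ precisely when $v_{i,m-1}=1$, i.e.\ (by Invariant~1) precisely when a full occurrence of $P$ ends at $v_i$, and no other position of $R'_i$ is ever set. The following logical $or$ sets $r_{i,j}=r'_{i,j}\lor r_{i-1,j}$; this records a freshly detected match into the running accumulator $R_i$ and is monotone, in that it can only switch bits on. This already settles the step within a single level, where $v_{i-1}$ and $v_i$ share one and the same position labelling: if a match ends at $v_i$ we obtain $r_{i,m-1}=1$, and otherwise the bit guaranteed by the induction hypothesis in $R_{i-1}$ is copied into $R_i$ at the same position.

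The delicate part, and where I expect the main obstacle to sit, is the transition across levels. When $v_{i-1}$ is the last node of level $l-1$ and $v_i$ the first node of level $l$, the intervening \texttt{OperationFour}() increments $J$ and hence relabels $\ket{j}_J\ket{r_{i-1,j}}_{R_{i-1}}$ as $\ket{j+1}_J\ket{r_{i-1,j}}_{R_{i-1}}$. I would argue that, since the increment is taken modulo the range $m$ of register $J$, a bit sitting at position $m-1$ wraps around to position $0$ rather than being discarded; the map on positions is therefore a bijection of $\{0,\dots,m-1\}$ that neither creates nor destroys set bits but merely moves them. Consequently the predicate ``there exists a $j$ with $\ket{r_{i-1,j}}_{R_{i-1}}=\ket{1}$'' is preserved by the shift, so the induction hypothesis---phrased as mere existence of a set bit---survives the relabelling, after which the $or$ of \texttt{OperationThree}() on $v_i$ again preserves it. The one point worth spelling out is that a recorded match need not remain at position $m-1$ under later shifts; this is harmless because Lemma~\ref{lemma:forinvariant-Ri} only claims existence of some $j$, which decouples the correctness of the accumulation from the per-level position alignment that Invariant~1 depends on. Chaining these facts along the processing order proves the stronger claim, and evaluating it at $i=t$---after a last shift that once more preserves existence---completes the proof.
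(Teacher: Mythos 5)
Your proposal is correct and follows essentially the same route as the paper's proof: both arguments use Invariant~1 to detect a fresh full match via the substate at position $m-1$, treat the \emph{or} in \texttt{OperationThree}() as a monotone accumulator $r_{i,j}=(v_{i,j}\land\delta_{m-1,j})\lor r_{i-1,j}$, and observe that the shift in \texttt{OperationFour}() merely relabels positions (the paper writes the post-shift state as $\sum_j\ket{j'}_J\ket{r_{i,j}}_{R_i}$ with $j'=j+1$), so the existential claim survives. Your per-node induction checkpointed after \texttt{OperationThree}() is just a cleaner formalisation of what the paper does inside its per-level inductive step, where it explicitly chains the same reasoning node by node through the middle \emph{for}-loop.
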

\begin{proof}
We proceed by induction on the number $l$ of times that we run the \emph{for}-loop at lines~\ref{algline:OuterForStart}-\ref{algline:OuterForEnd}.

\textbf{Base case}, $l=0$. In this case, nodes $v_t$ such that $t \in L_{l'}$, $l'\leq 0$ are those with in-degree zero, while the \emph{for}-loop at lines~\ref{algline:OuterForStart}-\ref{algline:OuterForEnd} has never run. Since we are visiting only single-node paths and we are assuming that pattern $P$ has length at least two, there can be no match for $P$ ending at these nodes. Correctly, $\ket{r_{i,j}}_{R_i}=\ket{0}$ for every $0\leq j\leq m-1$.

\textbf{Inductive case}, $1\leq l\leq L-1$.
By inductive hypothesis, we assume the statement of the lemma to be true right after running iteration $l$ of the \emph{for}-loop at lines~\ref{algline:OuterForStart}-\ref{algline:OuterForEnd}, and thus right before executing \texttt{IncreaseJ}() at line~\ref{algline:OuterForEnd}. After the execution of \texttt{IncreaseJ}(), the new state is $\sum_{j=0}^{m-1}\ket{j'}_J\ket{r_{i,j}}_{R_i}$, where $j'=j + 1$ and $v_i$ is the last node visited so far. Then, we start iteration $l+1$, processing $i'\in L_{l+1}$, $i'= t + 1$. We execute \texttt{OperationOne}() \texttt{OperationTwo}(), which do not affect register $R_{i'}$. Then we run the operations at lines~\ref{algline:op3}-\ref{algline:op3}, obtaining $\ket{r_{i',j'}}_{R_i}$ where $r_{i',j'} = (v_{i',j'}\land \delta_{m-1,j'})\lor r_{t,j}$. Let us consider the first time we run the middle \emph{for}-loop during iteration $l+1$ of the outer \emph{for}-loop. If $P$ has a match ending at some $v_z$, $z < i'$, the inductive hypothesis guarantees $r_{t,j}=1$ for some $j$. Otherwise, if $P$ does not have any such match, then $r_{t,j}=0$ for all $j$. In this second case, if $P$ has a match ending at $v_{i'}$, we know by Lemma~\ref{lemma:forinvariant-Vi} that $v_{i',m-1}=1$. This, combined with the fact that $\delta_{m-1,m-1}=1$, correctly implies that $r_{i',m-1}=1$, proving the statement for this specific $i'$ and $j'=m-1$. If $P$ has no match ending at $v_{i'}$, then $v_{i',m-1}=0$, and $r_{i',j'}=0$ for all $j'$, which must be the case when no match has been found yet. To conclude the proof, notice that the same reasoning applies for the subsequent iterations of the middle \emph{for}-loop by using every time the previous instance of this reasoning in place of the inductive hypothesis. That is, we use $r_{i',j'}$ when proving the statement for $r_{i'+ 1,j'}$ and so on, until we prove the statement for $r_{t',j'}$, where $v_{t'}$ is the last node with index in $L_{l+1}$. At this point, we exit the middle \emph{for}-loop and the statement of the lemma is proven for $l+1$.
\end{proof}

The correctness of the algorithm follows from the previous lemma combined with few additional observations.
\begin{theorem}
\label{theorem:QuantumSMLGCorrectness}
Given pattern string $P$ of length at least $2$ and level DAG $G$,  Algorithm~\ref{alg:quantumSMLG} returns the right answer for the \smlg problem on $P$ and $G$ with probability $p>1-(7/8)^c$, for any given integer $c$.
\end{theorem}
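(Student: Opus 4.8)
The plan is to reduce the theorem to two structural facts about the register $R_{n-1}$ at the point reached after the outer \emph{for}-loop (line~\ref{algline:OuterForEnd}) terminates, and then to analyse the final Grover stage. Writing the post-loop state on the relevant registers as $\sum_{j=0}^{m-1}\ket{j}_J\ket{r_{n-1,j}}_{R_{n-1}}$, I would establish: (a) if $P$ occurs in $G$, then $\ket{r_{n-1,j}}_{R_{n-1}}=\ket{1}$ for at least one $j$; and (b) if $P$ does not occur in $G$, then $\ket{r_{n-1,j}}_{R_{n-1}}=\ket{0}$ for every $j$. Direction (a) is exactly \emph{Invariant 2} (Lemma~\ref{lemma:forinvariant-Ri}) instantiated at the final outer iteration $l=L-1$, for which the last visited node is $v_t=v_{n-1}$; since every occurrence of $P$ ends at some node $v_i$ with $i\le n-1$, the invariant supplies a marked substate. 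Direction (b) is the converse, which is not in the lemma statement but follows from the explicit gate definitions: by \emph{Invariant 1} (Lemma~\ref{lemma:forinvariant-Vi}) the bit $v_{i,m-1}$ equals $1$ only when a path spelling all of $P$ ends at $v_i$, and each $r_{i,j}$ is built from the $v_{z,\cdot}$ with $z\le i$ solely through the $\mathrm{AND}$ with $\delta_{m-1,\cdot}$ in \texttt{OperationThree}() and the subsequent $\mathrm{OR}$s, so absent any occurrence every $r_{n-1,j}$ stays $0$.

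Given (a) and (b), correctness of the decision reduces to the behaviour of the final Grover stage. For a \textbf{no}-instance, (b) says the flag register $R_{n-1}$ holds $\ket{0}$ on the entire support of the state; the Grover oracle flips a phase only on marked substates, of which there are none, and the diffusion operator acts on the index register $J$ alone, so $R_{n-1}$ remains $\ket{0}$ throughout and the measurement returns $R_{cl}=0$ in each of the $c$ repetitions. Hence a \textbf{no}-instance is answered \emph{no} with probability exactly $1$. In particular the algorithm never outputs \emph{yes} on a \textbf{no}-instance, since outputting \emph{yes} requires measuring a genuinely marked substate, which by (b) cannot exist when $P$ does not occur. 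It therefore remains only to bound, for a \textbf{yes}-instance, the probability that the $c$ repetitions all fail to surface a marked substate.

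The main obstacle is precisely this last point, because the number $t^\ast\ge 1$ of marked values of $j$ (the number of solutions over the index register $J$) is unknown, so the optimal iteration count cannot be fixed in advance. I would handle it with the standard unknown-count technique of Boyer, Brassard, H\o yer and Tapp: in each repetition we draw $K\in[0,|P|]$ uniformly and run the number of iterations that would be optimal for $K$ solutions. Averaging the Grover success probability $\sin^2\!\big((2R_K+1)\theta\big)$, where $\sin^2\theta=t^\ast/N$ and $N$ is the power-of-two padding of $|P|$, over this random choice shows that a single repetition measures a marked substate with probability at least a fixed positive constant; taking into account the spurious padding substates (treated as in Appendix~\ref{appendix:power-of-two}), this constant can be taken to be $1/8$. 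Since the $c$ repetitions of the outer loop are mutually independent, the probability that all of them fail to observe a marked substate is at most $(7/8)^c$, so a \textbf{yes}-instance is answered \emph{yes} with probability $p>1-(7/8)^c$. Combining this with the probability-$1$ guarantee on \textbf{no}-instances yields the stated bound for the \smlg decision and completes the proof.
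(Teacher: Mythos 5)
Your decomposition matches the paper's proof almost step for step: Invariant~2 (Lemma~\ref{lemma:forinvariant-Ri}) instantiated at the last visited node $v_{n-1}$ supplies the marked substates on yes-instances, no-instances are answered \emph{no} with certainty because no marked substate can arise and the diffusion step acts trivially, and $c$ independent repetitions of the randomized Grover stage give the $1-(7/8)^c$ bound. Your direction (b) is in fact more careful than the paper: its lemma states only implication (a), yet its proof of the theorem uses both directions, so deriving (b) from Invariant~1 (Lemma~\ref{lemma:forinvariant-Vi}) and the gate structure of \texttt{OperationThree}() is a genuine strengthening rather than a deviation.

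The gap is in the one quantitative step: the claim that averaging $\sin^2\bigl((2R_K+1)\theta\bigr)$ over the uniform count-guess $K\in[0,|P|]$, running the iteration count optimal for $K$ solutions, yields a per-trial success probability bounded below by the constant $1/8$. That claim is false in general. Take a single marked index ($t^\ast=1$) among $N$ padded positions: the optimal iteration count for guess $K$ is roughly $(\pi/4)\sqrt{N/K}$, so the success probability is roughly $\sin^2\bigl((\pi/2)/\sqrt{K}\bigr)$, and its average over uniform $K$ is $\Theta(\log N/N)$, which vanishes rather than staying above $1/8$. The technique of Boyer, Brassard, H{\o}yer and Tapp that you invoke randomizes the \emph{number of iterations} (uniformly over a long enough range, or over exponentially growing ranges), not the guess of the solution count, and the paper's appendix proof analyses exactly that variant: the iteration count is drawn uniformly from $[1,\lambda_1]$ with $\lambda_1\approx(\pi/2)\sqrt{N}$, and the per-trial failure bound $7/8$ comes from the ``half of the material of the sinusoid lies above $1/2$'' case analysis, with a worst-case allowance of failure probability $1$ on the possibly biased partial period when $M>1$. (In fairness, the pseudocode itself says ``choose $K\in[0,|P|]$ uniformly,'' so you described the algorithm as written; but the probability analysis only goes through for the uniform-iteration-count randomization, and your averaging assertion cannot be repaired for the count-guess version.) To close the gap, restate the Grover stage with a uniformly random iteration count in $[1,\lambda_1]$ and reproduce that case analysis, or cite the corresponding BBHT lemma for it.
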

\begin{proof}
After running the outer \emph{for}-loop of Algorithm~\ref{alg:quantumSMLG} $L-1$ times, we exit such a loop, and we know we have visited all the nodes (nodes in $L_0$ where visited during the initialization). If we consider Lemma~\ref{lemma:forinvariant-Ri} applied in the case of $t=n-1$, we are considering all the nodes, which means that if $P$ has no match ending in $G$, then no substate of register $R_{n-1}$ is such that $\ket{r_{n-1,j}}_{R_{n-1}}=\ket{1}$, for any $j$. Instead, if $P$ has a match in $G$, then at least one substate of $R_{n-1}$ is such that $\ket{r_{n-1,j}}_{R_{n-1}}=\ket{1}$, for some $j$. We use standard techniques that consist in rerunning the algorithm a constant number of times to boost the probability of measuring such a state, and achieve the desired one. Appendix~\ref{appendix:theorem5-full-proof} provides a more detailed analysis.
\end{proof}

Finally, the time complexity of our algorithm is subquadratic in the size of the graph.
\begin{theorem}
The time complexity of Algorithm~\ref{alg:quantumSMLG} is $O(|E|\sqrt{|P|})$ in the QRAM model, and the space complexity is $O(|E|+|V|)$.
\end{theorem}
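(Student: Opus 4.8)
The plan is to account separately for the three phases of Algorithm~\ref{alg:quantumSMLG}: the initialization (including \texttt{SourceNodesInit}), the nested for-loops that process all edges and nodes, and the final Grover search. For each phase I would bound the number of elementary quantum gates and QRAM read operations, and then invoke the Word-QRAM assumption (stated in Section~3) that a read with a $\log$-sized index register, as well as any single arithmetic operation on a register of $O(\log\max\{|V|,|E|,|P|\})$ qubits, costs $O(1)$. Under this model, each gate-level primitive described in the operation breakdowns—loading $\ell(v_i)$ into $C$, fetching the matrix column into $M$, the $CX$/$X$/Toffoli combinations implementing $or$ and $and$, and the increments \texttt{IncreaseI}/\texttt{IncreaseJ}—is $O(1)$.

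\textbf{For the time bound}, I would argue as follows. The source-node initialization touches each node in $L_0$ a constant number of times, costing $O(|V|)$ in the worst case. In the main double loop, the body of \texttt{OperationOne} contains an inner loop over the $D_i-1$ in-neighbours of $v_i$, each iteration performing an $O(1)$ $or$-computation; summing over all nodes gives $\sum_{i} D_i = |E|$ total edge-contributions, hence $O(|E|)$ work. The remaining per-node operations (\texttt{OperationTwo}, \texttt{OperationThree}, \texttt{IncreaseI}, and the character/matrix loads) are each $O(1)$ and are executed once per node, contributing $O(|V|)$. The per-level \texttt{OperationFour}/\texttt{IncreaseJ} is executed $L-1 \le |V|$ times, also within $O(|V|)$. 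Thus the deterministic part of one run costs $O(|E|+|V|)$. The concluding Grover search over the $R_{n-1}$ substates runs its iterate the optimal number of times for a guessed count $K$, giving $O(\sqrt{|P|})$ iterations since the search space has size $m=|P|$; each iterate is $O(1)$ under the model. The outer repetition factor $c$ is constant, so the total is $O(|E|+|V|+\sqrt{|P|})$.

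\textbf{For the space bound}, I would count the qubits and registers declared at lines~\ref{algline:init}-\ref{algline:op4}. Registers $I,J,C,M,A,B,Q$ each use $O(\log\max\{|V|,|P|\})$ qubits, which is negligible. The dominant cost comes from the one qubit $V_i$, $V'_i$, $R_i$, $R'_i$ per node—contributing $O(|V|)$—and from the qubits $E_{i,d}$ indexed by $i\in[0,n-1]$ and $d\in[0,D_i-1]$, of which there are $\sum_i D_i = |E|$ in total, contributing $O(|E|)$. Adding these gives the stated $O(|E|+|V|)$ space.

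\textbf{The main obstacle} I expect is making the $O(1)$-per-primitive claim fully rigorous in the QRAM model, since the generalized Toffoli gates controlled on register $J$ (used to set up $A$, $B$, and in the $\delta$-based filtering) have cost proportional to the number of control qubits, i.e.\ $O(\log|P|)$; the paper already flags this and absorbs it into $O(1)$ by the Word-QRAM convention, so I would state that convention explicitly and confirm that no primitive incurs more than this logarithmic-that-counts-as-constant cost. A secondary subtlety is verifying that the edge-qubits $E_{i,d}$ and the superposition arithmetic modulo $2^{|P|}$ genuinely reduce to the constant-cost primitives of the model rather than hiding a factor of $|P|$; here I would appeal to the fact that each such operation acts on $O(\log\cdot)$-or single-qubit targets across the superposition in parallel, which is exactly where the bit-parallel-to-quantum translation pays off.
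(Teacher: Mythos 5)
Your proposal is correct and follows essentially the same argument as the paper's own proof: the same charging scheme (constant work per node plus the in-neighbour loop summing to $\sum_i D_i = |E|$, the $O(\sqrt{|P|})$ final search, and the qubit count $|V|$ copies of $V_i, V'_i, R_i, R'_i$ plus $|E|$ qubits $E_{i,d}$ for space), all under the same Word-QRAM constant-cost convention. Your explicit treatment of the generalized-Toffoli cost is a point the paper handles earlier in the algorithm description rather than inside the proof, but it is the same resolution.
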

\begin{proof}
The algorithm uses $|V|$ qubits $V_i$, and the same amount of qubits $V'_i$, $R_i$, $R'_i$; qubits $E_{i,d}$ are a total of $|E|$ qubits, and the rest are a constant number of qubits and registers. Thus, the space complexity is $O(|E|+|V|)$.

With the \emph{for}-loop in function \texttt{SourceNodesInit}(), the algorithm visits the nodes in $L_0$, which are at most $O(|V|)$. The iteration conditions at lines~\ref{algline:OuterForStart}~and~\ref{algline:MiddleForStart} make the algorithm visit every node. For each such iteration, we perform a constant number of operations except for the \emph{for}-loop in \texttt{OperationOne}(). This \emph{for}-loop visits all the in-neighbours of a node, each time performing a constant number of operations, and $\sum_{i=0}^{|V|-1}|in(i)|=|E|$. All of the aforementioned operations can be implemented with a constant number of quantum-gate applications, each affecting a constant number of qubits ($3$ at most), or by performing a load operation from the QRAM, assumed to require constant time. At the end of the algorithm, we run Grover's search procedure on a superposition of $2|P|$ states, using the entire algorithm as the oracle function.

Summing everything up, we spend $O(|V|)$ time for the initialization, $O(|V|+|E|)$ time in the \emph{for}-loops, and $O(|E|\sqrt{|P|})$ time for Grover's search procedure. The total time complexity is thus dominated by $O(|E|\sqrt{|P|})$. 
\end{proof}

\bibliography{biblio}

\appendix

\section{Reductions to the power-of-two case}
\label{appendix:power-of-two}
For string matching in plain text, if $|T|$ is not a power of two, in addition to quantum register $I$ for indexing, we use also quantum register $I'$, of the same size. Let $x$ be the only integer $x$ such that $|T|<2^x<2|T|$. Generate superposition $\sum\limits_{i'=0}^{2^x-1}\ket{i'}_{I'}\ket{0}_I$ and compute $\sum\limits_{i'=0}^{2^x-1}\ket{i'}_{I'}\ket{i' \,mod\, |T|}_I$. Now run the algorithm using $I$ as normal. This creates some redundant substates, but does not affect the correctness of the algorithm.

For \smlg in level DAGs, if $|P|$ is not a power of two, we generate a superposition of size $2^x$, where $x$ is the only integer such that $|P|<2^x<2|P|$. Then, it sufficies to assume that every entry that we read from the QRAM to the additional substates between $|P|$ and $2^x$ is always initialized to $\ket{1}$, because this is the neutral value in a logical \emph{and}, an thus in the application of the Toffoli gate. Therefore, in these substates, a qubit $R_i$ can and will be set to value $\ket{1}_{R_i}$ if and only if a previous ``shift'' carried $\ket{1}_{R_{i-1}}$.

Alternatively, if $|P|$ is not a power of two, we can classically reduce the problem to this case. We add new symbol $\$$ to the alphabet. Then, we pad $P$ with as many $\$$ at the end as needed to reach the next power of two. For each level in the DAG, we add a new node with label $\$$, and we place an edge for every node in that level to the new node. We connect all this new nodes in a chain, and we also add a chain of $|P|$ such nodes after the last level (they create new levels consisting only of one node). The pattern now can overflow in these nodes after finding a proper match in the DAG. Finally, we apply the same binary encoding as in the plain text case, now replacing every node with a chain of two nodes, sending all the incoming edges to the first node and making all the outgoing edges leave from the second node. Overall, we add one new node per level, and one new edge per node, plus $|P|$ additional nodes and edges after the last level. This takes time $O(|E|+|P|)$.

\section{Additional pseudo-code}
\label{appendix:pseudocode}

\begin{algorithm}
    \SetKwFunction{FSourceNodesInit}{SourceNodesInit}

    \SetKwProg{Fn}{Function}{:}{}
    
    \Fn{\FSourceNodesInit{$I,J,A,C,V_0,\ldots,V_{n-1},V'_0,\ldots,V'_{n-1}$}}{        
        \tcp{Initialize $\ket{a_j}_A$ so that $a_j=1$ if $j=0$, $a_j=0$ otherwise}
        $\frac{1}{\sqrt{m}}\sum\limits_{j=0}^{m-1}\ket{j}_J\ket{0}_A \rightarrow \frac{1}{\sqrt{m}}\sum\limits_{j=0}^{m-1}\ket{j}_J\ket{\delta_{0,j}}_A $\;
        
        \tcp{Initialize $\ket{b_j}_B$ so that $b_j=1$ if $j=m-1$, $b_j=0$ otherwise}
        $\frac{1}{\sqrt{m}}\sum\limits_{j=0}^{m-1}\ket{j}_J\ket{0}_B \rightarrow \frac{1}{\sqrt{m}}\sum\limits_{j=0}^{m-1}\ket{j}_J\ket{\delta_{m-1,j}}_B $\;    
        
        \tcp{$L_0$ is the set of nodes in level $0$.}
        \For{$|L_0|$ times}{
            \tcp{Read node label $\ell(v_i)$ in $C$}
            $\frac{1}{\sqrt{m}}\sum\limits_{j=0}^{m-1}\ket{j}_J\ket{i}_I\ket{0}_{C} \rightarrow \frac{1}{\sqrt{m}}\sum\limits_{j=0}^{m-1}\ket{j}_J\ket{i}_I\ket{\ell(v_i)}_{C}$\;
            
            \tcp{Read the matrix entries for character $\ell(v_i)$ in $M$}
            $\frac{1}{\sqrt{m}}\sum\limits_{j=0}^{m-1}\ket{j}_J\ket{\ell(v_i)}_{C}\ket{0}_{M} \rightarrow \frac{1}{\sqrt{m}}\sum\limits_{j=0}^{m-1}\ket{j}_J\ket{\ell(v_i)}_{C}\ket{m_{\ell(v_i),j}}_{M}$\;
            
            \tcp{Apply Toffoli to qubits $M$, $A$ and $V_i$}
            $\frac{1}{\sqrt{m}}\sum\limits_{j=0}^{m-1}\ket{m_{\ell(v_i),j}}_{M}\ket{\delta_{0,j}}_{A}\ket{0}_{V_i}\rightarrow
            \frac{1}{\sqrt{m}}\sum\limits_{j=0}^{m-1}\ket{m_{\ell(v_i),j}}_{M}\ket{\delta_{0,j}}_{A}\ket{m_{\ell(v_i),j} \land \delta_{0,j}}_{V_i}$\;
            
            \tcp{Reset $M$ and $C$}
            $\frac{1}{\sqrt{m}}\sum\limits_{j=0}^{m-1}\ket{j}_J\ket{\ell(v_i)}_C\ket{m_{\ell(v_i),j}}_M \rightarrow \frac{1}{\sqrt{m}}\sum\limits_{j=0}^{m-1}\ket{j}_J\ket{\ell(v_i)}_C\ket{m_{\ell(v_i),j} \oplus m_{\ell(v_i),j}}_M = \frac{1}{\sqrt{m}}\sum\limits_{j=0}^{m-1}\ket{j}_J\ket{\ell(v_i)}_{C_i}\ket{0}_M$\;
            
            $\frac{1}{\sqrt{m}}\sum\limits_{j=0}^{m-1}\ket{j}_J\ket{i}_I\ket{\ell(v_i)}_{C} \rightarrow$ \mbox{$\frac{1}{\sqrt{m}}\sum\limits_{j=0}^{m-1}\ket{j}_J\ket{i}_I\ket{\ell(v_i) \oplus \ell(v_i)}_{C} = \frac{1}{\sqrt{m}}\sum\limits_{j=0}^{m-1}\ket{j}_J\ket{i}_I\ket{0}_{C}$}\;
            
            \tcp{Increase $I$ by one to visit the next node}
            $\frac{1}{\sqrt{m}}\sum\limits_{j=0}^{m-1}\ket{1}_Q\ket{i}_I\rightarrow \frac{1}{\sqrt{m}}\sum\limits_{j=0}^{m-1}\ket{1}_Q\ket{i + 1}_I$\;
        }
    }
\end{algorithm}

\begin{algorithm}
    \SetKwFunction{FOperationFour}{OperationFour}
    \SetKwFunction{FIncreaseI}{IncreaseI}
    \SetKwFunction{FIncreaseJ}{IncreaseJ}
    \SetKwFunction{FDoubleSuperposition}{DoubleSuperposition}

    \SetKwProg{Fn}{Function}{:}{}
    
    \Fn{\FIncreaseI{$I,M,C$}}{
        \tcp{Reset $M$ and $C$}
        $\frac{1}{\sqrt{m}}\sum\limits_{j=0}^{m-1}\ket{j}_J\ket{\ell(v_i)}_C\ket{m_{\ell(v_i),j}}_M \rightarrow \frac{1}{\sqrt{m}}\sum\limits_{j=0}^{m-1}\ket{j}_J\ket{\ell(v_i)}_C\ket{m_{\ell(v_i),j} \oplus m_{\ell(v_i),j}}_M \rightarrow \frac{1}{\sqrt{m}}\sum\limits_{j=0}^{m-1}\ket{j}_J\ket{\ell(v_i)}_{C_i}\ket{0}_M$\;
        
        $\frac{1}{\sqrt{m}}\sum\limits_{j=0}^{m-1}\ket{j}_J\ket{i}_I\ket{\ell(v_i)}_C \rightarrow \frac{1}{\sqrt{m}}\sum\limits_{j=0}^{m-1}\ket{j}_J\ket{i}_I\ket{\ell(v_i) \oplus \ell(v_i)}_C = \frac{1}{\sqrt{m}}\sum\limits_{j=0}^{m-1}\ket{j}_J\ket{i}_I\ket{0}_C$\;
        
        \tcp{Increase $I$}
        $\frac{1}{\sqrt{m}}\sum\limits_{j=0}^{m-1}\ket{j}_J\ket{1}_Q\ket{i}_I\rightarrow\frac{1}{\sqrt{m}}\sum\limits_{j=0}^{m-1}\ket{j}_J\ket{1}_Q\ket{i + 1}_{I}$\;
    }
\end{algorithm}

\begin{algorithm}
    \SetKwFunction{FOperationFour}{OperationFour}
    \SetKwFunction{FIncreaseI}{IncreaseI}
    \SetKwFunction{FIncreaseJ}{IncreaseJ}
    \SetKwFunction{FDoubleSuperposition}{DoubleSuperposition}

    \SetKwProg{Fn}{Function}{:}{}
    \Fn{\FIncreaseJ{$J,A,B$}}{
        \tcp{Reset $A$ and $B$}
        $\frac{1}{\sqrt{m}}\sum\limits_{j=0}^{m-1}\ket{j}_J\ket{\delta_{0,j}}_A \rightarrow  \frac{1}{\sqrt{m}}\sum\limits_{j=0}^{m-1}\ket{j}_J\ket{0}_A$\;
        $\frac{1}{\sqrt{m}}\sum\limits_{j=0}^{m-1}\ket{j}_J\ket{\delta_{m-1,j}}_B \rightarrow  \frac{1}{\sqrt{m}}\sum\limits_{j=0}^{m-1}\ket{j}_J\ket{0}_B$\;
        
        \tcp{Increase $J$}
        $\frac{1}{\sqrt{m}}\sum\limits_{j=0}^{m-1}\ket{1}_Q\ket{j}_J\rightarrow\frac{1}{\sqrt{m}}\sum\limits_{j=0}^{m-1}\ket{1}_Q\ket{j + 1}_{J}$\;
        
        \tcp{Reinitialize $A$ and $B$}
        $\frac{1}{\sqrt{m}}\sum\limits_{j=0}^{m-1}\ket{j}_J\ket{0}_A \rightarrow \frac{1}{\sqrt{m}}\sum\limits_{j=0}^{m-1}\ket{j}_J\ket{\delta_{0,j}}_A$\;
        $\frac{1}{\sqrt{m}}\sum\limits_{j=0}^{m-1}\ket{j}_J\ket{0}_B \rightarrow \frac{1}{\sqrt{m}}\sum\limits_{j=0}^{m-1}\ket{j}_J\ket{\delta_{m-1,j}}_B $\;
    }
\end{algorithm}

\section{Full proof of Theorem~\ref{theorem:QuantumSMLGCorrectness}.}
\label{appendix:theorem5-full-proof}

\begin{proof}
After running the outer \emph{for}-loop of Algorithm~\ref{alg:quantumSMLG} $L-1$ times, we exit such a loop, and we know we have visited all the nodes (nodes in $L_0$ where visited during the initialization). If we consider Lemma~\ref{lemma:forinvariant-Ri} applied in the case of $t=n-1$, we are considering all the nodes, which means that if $P$ has no match ending in $G$, then no substate of register $R_{n-1}$ is such that $\ket{r_{n-1,j}}_{R_{n-1}}=\ket{1}$, for any $j$. Instead, if $P$ has a match in $G$, then at least one substate of $R_{n-1}$ is such that $\ket{r_{n-1,j}}_{R_{n-1}}=\ket{1}$, for some $j$.

The \emph{for} loop that we run at the end of the algorithm ensures to achieve high probability of success. The probability of success $p$ in Grover's search algorithm is the sinusoidal function $p(K)=\sin^2((2K+1)\theta)$ \cite{BBGHT98}, where $\theta=\sin^{-1}\left(\sqrt{\frac{M}{N}}\right)$, $N$ is the search space, $M$ is the number of good solutions and $K$ is the number of iterations of the Grover's operator. This function has period $\lambda_M \approx \frac{\pi}{2}\sqrt{\frac{N}{M}}-1$. Consider the case $M=1$. If we choose a random number of iterations $K$ between $1$ and $\lambda_1$, we have $p(K)\geq 1/2$ with probability $1/2$. This is because half of the material of the function is above the horizontal line of $1/2$. When $p(K)\geq1/2$, the probability of measuring a wrong result is $p_{\text{top}}\leq 1/2$. When $p(K)\geq1/2$, the probability of measuring a wrong result is greater than $1/2$, but anyway $p_{\text{bottom}}\leq 1$. If we run the process $c$ times, the overall probability of failure (measuring a wrong result) $p_f$ is then
\[
p_f = \left(p_{\text{top}}\frac12 + p_{\text{bottom}}\frac12\right)^c \leq \left(\frac12\frac12 + 1\cdot \frac12\right)^c = \left(\frac34\right)^c
\]
Thus, the probability of success (measuring a correct result) is $p_s=1-(3/4)^c$.

In the general case $1<M\leq N$, the period $\lambda_M$ of function $p(K)$ is smaller than period $\lambda_1$ of the case $M=1$. We can still use the same random number of iterations $K$ between $1$ and $\lambda_1$, as nearly half of the material of the function $p(K)$ is above the horizontal line of $1/2$: the worst case is when $\lambda_M$ is little over half of $\lambda_1$. In this case we know that $p(K)$ will be sampled uniformly over half of the range of period $\lambda_1$, but the other half may have biased sampling. Namely, the other half of the function might have more material below $1/2$ than above. To have a safe estimate, we assume that the probability of returning the wrong result in the biased case is $p_{\text{biased}}=1$. That is, if we run the process $c$ times, the overall probability of failure (measuring a wrong result) $p_f$ is then
\[
p_f = \left(p_{\text{biased}}\frac12 + (p_{\text{top}}\frac12 + p_{\text{bottom}}\frac12)\frac12\right)^c \leq \left(1\cdot \frac12+(\frac12\frac12 + 1\cdot \frac12)\frac12\right)^c = \left(\frac78\right)^c
\]
Thus, the probability of success (measuring a correct result) is $p_s=1-(7/8)^c$.  

\end{proof}

\end{document}